\newcommand{\RED}[1]{\textcolor{red}{#1}}
\newtheorem{corollary}{Corollary}[section]
\newtheorem{lemma}{Lemma}[section]
\newtheorem{proposition}{Proposition}[section]
\newtheorem{remark}{Remark}
\newtheorem{theorem}{Theorem}
\renewcommand{\vec}{\operatorname{vec}}
\newcommand{\N}{\ensuremath{\mathbb N}{}}
\newcommand{\R}[1]{\ensuremath{\mathbb R}^{\,#1}{}}
\newcommand{\C}[1]{\ensuremath{\mathbb C}^{\,#1}{}}
\newcommand{\unity}{\ensuremath{{\rm 1 \negthickspace l}{}}}
\newcommand{\expt}[1]{\ensuremath{\langle #1 \rangle}{}}
\newcommand{\adr}{\operatorname{ad}}
\newcommand{\Adr}{\operatorname{Ad}}
\renewcommand{\vec}{\operatorname{vec}{}}
\newcommand{\diag}{\operatorname{diag}{}}
\newcommand{\iso}{\ensuremath{\overset{\rm iso}{=}}\xspace}
\newcommand{\ct}{\ensuremath{\cos(\theta)}}
\newcommand{\st}{\ensuremath{\sin(\theta)}}
\newcommand{\tr}{\operatorname{tr}}
\newcommand{\comm}[2]{\ensuremath{[#1,#2]}}
\newcommand{\herm}{\mathfrak{her}{}}
\newcommand{\pos}{\mathfrak{pos}{}}
\newcommand{\Reach}{\operatorname{Reach}{}}
\newcommand{\conv}{\operatorname{conv}{}}
\newcommand{\su}{\mathfrak{su}}
\newcommand{\so}{\mathfrak{so}}
\newcommand{\gl}{\mathfrak{gl}}
\newcommand{\e}{{\rm e}}
\newcommand{\rT}{{\rm T}}
\newcommand{\fe}{\mathfrak{e}}
\newcommand{\fg}{\mathfrak{g}}
\newcommand{\fw}{\mathfrak{w}}
\newcommand{\fc}{\mathfrak{c}}
\newcommand{\fk}{\mathfrak{k}}
\newcommand{\fh}{\mathfrak{h}}
\newcommand{\fp}{\mathfrak{p}}
\newcommand{\sym}{\mathfrak{sym}}
\newcommand{\fs}{\mathfrak{s}}
\newcommand{\bG}{\textbf{G}}
\newcommand{\bS}{\textbf{S}}
\newcommand{\bK}{\textbf{K}}
\newcommand{\bH}{\textbf{H}}
\newcommand{\bP}{\textbf{P}}
\newcommand{\bT}{\textbf{T}}
\newcommand{\cLu}{\mathcal{L}_u}
\newcommand{\sA}{\ensuremath{\sf{A}}\xspace}
\newcommand{\sB}{\ensuremath{\sf{B}}\xspace}
\begin{document}
\title{Illustrating the Geometry of Coherently Controlled\\
Unital Open Quantum Systems}

\author{Corey~O'Meara, Gunther~Dirr, and Thomas Schulte-Herbr{\"u}ggen$^*$%
\thanks{C. O'Meara and T. Schulte-Herbr{\"u}ggen are with the Department of Chemistry,
Technical University of Munich, 85747 Garching, Germany.
G. Dirr is with the Institute of Mathematics, University of W{\"u}rzburg, 97074 W{\"u}rzburg, Germany.
E-mail: tosh@ch.tum.de ($^*$to whom correspondence should be addressed)}
\thanks{This work was supported in part by the integrated EU programme Q"~ESSENCE,
the exchange with COQUIT,
and by {\em Deutsche Forschungsgemeinschaft} (DFG) in the
collaborative research centre SFB 631.
We also gratefully acknowledge
support and collaboration enabled within the two International
Doctorate Programmes of Excellence
{\em Quantum Computing, Control, and Communication} (QCCC) as well as
{\em Identification, Optimisation and Control with Applications in
Modern Technologies} by the Bavarian excellence network ENB.}}


\IEEEspecialpapernotice{ {\small dated: Aug.~15. 2011} }

\maketitle


%
\IEEEpeerreviewmaketitle

\begin{abstract}
\boldmath
We extend standard Markovian open quantum systems (quantum channels)
by allowing for
Hamiltonian controls and elucidate their geometry in terms of
Lie semigroups. For standard dissipative interactions with
the environment and different coherent controls, we particularly
specify the tangent cones (Lie wedges) of the respective Lie
semigroups of quantum channels.
These cones are the counterpart of the infinitesimal generator of a
single one-parameter semigroup. They comprise all directions the
underlying open quantum system can be steered to and thus give
insight into the geometry of controlled open quantum dynamics.
Such a differential characterisation is highly valuable for
approximating reachable sets of given initial quantum states
in a plethora of experimental implementations.
\end{abstract}

\tableofcontents

\section{Introduction}
Extending quantum channels by allowing for Hamiltonian control
turns them into interesting and important examples of geometric control of
open systems. While for \emph{closed systems}, the theory of Lie
groups provides a rich structure to address questions of reachability,
accessibility, and controllability \cite{Jurdjevic97}, already
simple \emph{open quantum systems} come with the intricate
geometry of Lie semigroups \cite{HHL89,Lawson99}.
For instance, in most closed systems
the reachable set to an initial state $\rho_0$ simply is 
the orbit $\mathcal O_{\bG}(\rho_0):=\{G\rho_0 G^{-1}\,|\, G\in\bG\}$
of a unitary subgroup $\bG$ whose Lie algebra can be identified
easily via Lie closure, while in {\em open systems} reachable sets are much
more difficult to determine explicitly.
Thus in view of controlling open quantum dynamics, in \cite{DHKS08}
we systematically related the framework of completely positive semigroups
\cite{Kraus71,Koss72,Koss72b,Choi75,GKS76,Lind76,Kraus83}, which is
well established in quantum physics, with the more recent
mathematical theory of Lie semigroups. An early example confined to
single-qubit systems can be found in \cite{Alt03}.

More precisely, for exploiting the power of systems and control theory
in open quantum dynamics, the system parameters have to be characterised
first, e.g., by input-output relations in the sense of quantum
process tomography. The decision problem whether the dynamics
of the quantum system thus specified is Markovian to good
approximation has recently been analysed \cite{Wolf08a,Wolf08b}.
Moreover (time-dependent) Markovian quantum channels
were elucidated from the viewpoint of divisibility \cite{Wolf08a} 
thus paving the way to Lie semigroups \cite{DHKS08}. Following up,
this work sets out to determine the geometry of quantum channel
semigroups in terms of their tangent cones (Lie wedges)
for a number of coherently controlled standard unital channels
in a unified frame in line with \cite{KuDiHeTAC11a}.

For the first time, here we 
explicitly parameterize the {\em set of all possible directions}
an open quantum system under coherent controls
may take --- its \emph{Lie wedge}. Thereby, we heavily exploit
the fact that the set of all reachable quantum maps governed
by a controlled Markovian master equation constitutes a Lie semigroup
\cite{DHKS08}.
Previous characterizations of reachable sets
for unital open quantum systems by majorization techniques,
e.g., \cite{Yuan10}, become increasingly inaccurate
once full controllability of the Hamiltonian part
(\mbox{condition (H)} {\em vide infra})
is violated, which for growing number of qubits happens
{\em in all experimentally realistic settings}. 
In contrast, the Lie-semigroup tools presented here
do not require \mbox{condition (H)}
and carry over to multi-qubit systems 
without the draw-back of increasing inaccuracy. 

\section{Theory and Background}
We start out by recalling some basic notions and notations
of Lie subsemigroups \cite{HHL89} and their
application for characterising reachable sets of quantum control
systems modelled by Lindblad-Kossakowski master equations
\cite{DHKS08}.

\subsection{Lie Semigroups}\label{sec:LieSemiBasics}
To begin with, let $\bG$ be a matrix Lie group, i.e.~a
group which is (isomorphic to) a path-connected subgroup of
$GL(n,\R{})$ or $GL(n,\C{})$ for some $n \in \N{}$, and let
$\fg$ be its corresponding matrix Lie algebra. Thus $\fg$ is
(isomorphic to) a Lie subalgebra of $\gl(n,\R{})$ or $\gl(n,\C{})$.
Then a subset $\bS\subset\bG$ which is closed under the
group operation in the sense $\bS\cdot\bS\subseteq\bS$ and which
contains the identity $\unity$ is said to be a {\em subsemigroup}
of $\textbf{G}$.
The largest subgroup within $\bS$ is written $E(\bS):=\bS\cap\bS^{-1}$.

Furthermore, a closed convex cone  $\fw \subset \fg$
is called a wedge. The largest linear subspace of $\fw$
is denoted $E(\mathfrak{w}):=\mathfrak{w}\cap(\mathfrak{-w})$
and it is termed the {\em edge of the wedge} $\fw$.
Now, $\fw\subseteq\fg$
is a {\em Lie wedge} of $\fg$ if it is invariant under
the adjoint action 
of the subgroup generated by the edge $E(\mathfrak{w})$,
i.e.~if it satisfies
\begin{equation}
e^A \,\fw\, e^{-A} = \fw
\end{equation}
(or equivalently $e^{\adr_A}(\fw) = \fw$) for all $A\in E(\fw)$.
Note that the edge of a Lie wedge always
forms a Lie subalgebra of $\fg$.

Moreover, for any closed subsemigroup $\bS$ of $\bG$ we define
its tangent cone $L(\bS)$ at the identity $\unity$ by
\begin{equation}
L(\bS) := \{ A\in\fg\,|\, \exp(tA)\in \bS\text{\; for all\;} t\geq 0\}\;.
\end{equation}
Then one can show that $L(\bS)$ is a Lie wedge of
$\fg$ satisfying the identity
$
E\big(L(\bS)\big) = L\big(E(\bS)\big) .
$
Yet, the \/`local-to-global\/' correspondence
between Lie wedges and closed connected subsemigroups is
much more subtle than the correspondence between Lie (sub)algebras
and Lie (sub)groups: for instance, several connected subsemigroups may
share the same Lie wedge $\fw$ in the sense that $L(\bS) = L(\bS')$
for $\bS \neq \bS'$, or conversely there may be Lie wedges $\fw$
which do not correspond to any subsemigroup, i.e.~$\fw=L(\bS)$
fails for all subsemigroups $\bS\subset\bG$.

Therefore, one introduces the important notion of
a {\em Lie subsemigroup} $\bS$ which is characterised by the equality
\begin{equation}
\bS = \overline{\expt{\exp L(\bS)}}_S\;,
\end{equation}
where the closure is taken in $\bG$ and $\expt{\exp L(\bS)}_S$
denotes the subsemigroup 
generated by $\exp L(\bS)$,
i.e.~$\expt{\exp L(\bS)}_S := \{e^{A_1}
\cdots e^{A_n}\,|\, n\in\N, \, A_1, \dots, A_n\in L(\bS)\}$.
Moreover, a Lie wedge $\fw$ is said to be {\em global} in $\bG$,
if there is a Lie subsemigroup $\bS\subset\bG$ such that
\begin{equation}
L(\bS)=\fw
\;.
\end{equation}
Thus, one has the identity $\bS=\overline{\expt{\exp \fw}}_S$.

Whenever a Lie wedge $\fw\subset\fg$ specialises to be
compatible with the Baker-Campbell-Hausdorff (BCH)
multiplication
\begin{equation}
A\star B := A+B+\tfrac{1}{2}[A,B]+ \dots
= \log(e^A e^B)\quad\forall A,B\in\fw\end{equation}
defined via the BCH series, it
is termed {\em Lie semialgebra}. For this to be the case, there
has to be an open BCH neighbourhood $\mathcal B\subset\fg$
of the origin in $\fg$ such that
$(\fw\cap \mathcal B)\star(\fw\cap \mathcal B) \subseteq\fw$.
An equivalent definition for 
being a Lie semialgebra is given by the tangential condition
\begin{equation}\label{eqn:semialg-incl}
[A,T_A\fw] \subset T_A\fw \quad\text{for all $A \in \fw$}\;,
\end{equation}
where $T_A\fw$ denotes the tangent space of $\fw$ at $A$
defined by
\begin{equation}
T_A\fw := \big(A^\perp \cap \fw^*\big)^\perp\;.
\end{equation}
Here $A^\perp$ denotes the {\em orthogonal complement} of $A$ and
$\fw^* :=
\{A \in \fg \,|\, \langle A,B \rangle \geq 0 \text{ for all } B \in \fw\}$
the {\em dual wedge}---both taken with respect to the standard trace
inner product.
The conceptual importance of Lie semialgebras roots in
the fact that---in Lie semialgebras---the exponential map of
a zero-neighbourhood in $L(\bS)$ yields a $\unity$"~neighbourhood
in $\bS$.
In contrast, as soon as $\fw$ is merely a Lie wedge that fails
to carry the stronger structure of a Lie semialgebra, there will be
elements in $\bS$ that are arbitrary close to the identity
without belonging to any one-parameter semigroup completely
contained in  $\bS$. For more details and a variety of
illustrative examples, we recommend \cite{HHL89} and
\cite{HofRupp97div}, where the respective introduction
does provide a lucid overview of the entire subject.
The connection between Lie semialgebras and time-independent Markovian
quantum channels has been worked out in detail in \cite{DHKS08}.

With these stipulations, the frame is set to describe the time
evolution of Markovian (i.e., memory-less) open quantum
systems in the differential geometric picture of Lie wedges.


\subsection{Markovian Quantum Dynamics and Quantum Channels}

Markovian quantum dynamics is conveniently described by
a linear autonomous differential equation
\begin{equation}\label{eqn:LE}
\dot X(t) = - \mathcal L \, X(t)\;, 
\end{equation}
where $X(t)$ usually denotes the state of a quantum system
represented by its density operator $\rho(t)$,
i.e.~$\rho(t)=\rho(t)^\dagger$, $\rho(t) \geq 0$, and
$\tr\rho(t)=1$. Here and henceforth, $(\cdot)^\dagger$
denotes the adjoint (complex-conjugate transpose).
For ensuring complete positivity,
$\mathcal L$ has to be of Lindblad form
\cite{Lind76}, i.e.
\begin{equation}\label{eqn:master-rho-uncontrolled}
\mathcal L(\rho) = i\adr_{H}(\rho)+\Gamma_L(\rho)\;,
\end{equation}
with $\adr_{H_j}(\rho):= [H_j, \rho]$ and
\begin{equation}\label{eqn:GKS}
\Gamma_L(\rho) := \tfrac{1}{2}\sum_kV_k^\dagger V_k\rho
+ \rho V_k^\dagger V_k-2V_k\rho V_k^\dagger\;,
\end{equation}
Here, the {\em Hamiltonian} $H$ is assumed to be a Hermitian
\mbox{$N \times N$} matrix while the {\em Lindblad generators}
$\{V_k\}$ may be arbitrary \mbox{$N\times N$} matrices.
The resulting equation of motion \eqref{eqn:LE}
acts on the vector space of all Hermitian operators,
$\herm(N)$, and more precisely, leaves the set of all
density operators 
$\pos_1(N) := \left\{\rho\in\herm(N)\,|\,\rho=\rho^\dagger,\rho\geq0, \tr\rho=1\right\}$ 
invariant.

\medskip

In \cite{DHKS08} it was shown that the set of all Lindblad
generators $\{-\mathcal{L}\}$ has an interpretation as a particular
Lie wedge. To see this, consider the group lift of \eqref{eqn:LE},
i.e.~now $X(t)$ denotes an element in the general linear group
$GL(\herm(N))$. Moreover, define the set of all completely
positive (cp), trace-preserving invertible linear operators
acting on $\herm(N)$ as $\bP^{cp}$, i.e.
\begin{equation*}
\bP^{cp} := \{T\in GL(\herm(N))\,|\, T \;\text{is cp and trace-preserving}\}
\end{equation*}
and let $\bP^{cp}_0$ denote its connected component of the identity.
Then, $\bP^{cp}$ is exactly the set of so-called invertible 
\emph{quantum channels}. A quantum channel $T$ is said to
be \emph{time independent Markovian} or briefly \emph{Markovian},
if it is a solution of \eqref{eqn:LE}. Thus $T = {\rm e}^{-t \mathcal{L}}$
for some fixed  Lindblad generators $\mathcal{L}$ and
some $t \geq 0$. Furthermore, $T$ is \emph{time dependent Markovian}
if it is a solution of \eqref{eqn:LE}, where now
$\mathcal{L} =  \mathcal{L}(t)$ may vary in time
(for terminology see also \cite{Wolf08a,Wolf08b}).
Finally, we will denote 
the set of all {\em time independent Markovian} and
{\em time dependent Markovian} quantum channels
by $\mathbf{MQC}$ and $\mathbf{TMQC}$, respectively. 
Then, with regard to the work by Lindblad \cite{Lind76} and
Kossakowski \cite{GKS76}, one obtains the following result
\cite{DHKS08}:
\begin{enumerate}
\item[(a)]
The global {\em Lie wedge} of $\bP^{cp}_0$ is given by the
{\em set of all Lindblad generators} of the form
\begin{equation}
- \mathcal L:= -\big(i\adr_{H} + \Gamma_L\big)
\end{equation}
with $H \in \herm(N)$ and  $\Gamma_L$ as in \eqref{eqn:GKS}.
\item[(b)]
The Lie semigroup
\begin{equation}
\overline{\expt{\exp L(\bP^{cp}_0)}}_S
\end{equation}
clearly contains $\mathbf{MQC}$ and moreover it exactly coincides with the closure of
$\mathbf{TMQC}$
thus excluding the {\em non}\/-Markovian ones in $\bP^{cp}_0$,
which is most remarkable.
\end{enumerate}

\noindent
While assertion (a) reformulates previous results by Lindblad
and Kossakowski \cite{Koss72,GKS76,Lind76}, part (b) is noteworthy
as it also says that $\bP^{cp}_0$ is {\em not} a Lie subsemigroup of
$GL(\herm(N))$.

\subsection{Coherently controlled Master Equations}
\label{subsec:notions}

\emph{Controlled} Markovian quantum dynamics is appropriately
addressed as right-invariant {\em bilinear control system}
\cite{DHKS08,dAll08,DiHeGAMM08,Elliott09}
\begin{equation}\label{eqn:sigma}
\dot \rho(t) = -\mathcal L_{u(t)}\big(\rho(t)\big)\quad,
\quad \rho(0) \in \pos_1(N)\;,
\end{equation}
where $\mathcal{L}_u$ now depends on some control variable
$u \in \R{}^m$.

Here, we focus on \emph{coherently controlled}
open systems. This means that $\mathcal{L}_u$ has the
following special from 
\begin{equation}\label{eqn:master-rho}
\mathcal{L}_u(\rho)  = -i\adr_{H_u}(\rho)-\Gamma_L(\rho)\quad\text{with}
\end{equation}
\begin{equation}\label{eqn:controlHam}
\adr_{H_u} := \adr_{H_d}+\sum_{j=1}^mu_j\adr_{H_j}\;.
\end{equation}
Note that the control terms $i\adr_{H_j}$ with {\em control Hamiltonians}
$H_j \in \herm(N)$ are usually switched  by piecewise constant
{\em control amplitudes} $u_j(t)\in\R{}$.
The drift term of \eqref{eqn:master-rho} is then composed of two parts, (i)
the term $i\adr_{H_d}$ (in abuse of language sometimes called
\/`Hamiltonian\/' drift) accounting for the coherent time evolution
and (ii) a dissipative Lindblad part $\Gamma_L$. 
So $\mathcal{L}_u$ denotes the {\em coherently controlled Lindbladian}.
As in the uncontrolled case,
system \eqref{eqn:sigma} acts on
the vector space of all Hermitian operators leaving the set of all
density operators invariant. Equivalently, one can regard
\eqref{eqn:sigma} as an affine system on 
$\mathfrak{her}_0(N) := \{ H \in \herm(N)\;|\; \tr H = 0\}$.

In the following, we further impose {\em unitality}, i.e.
we assume $\Gamma_L(\unity)=0$. This ensures that
\eqref{eqn:sigma} actually yields  a bilinear control system on
$\mathfrak{her}_0(N)$ instead of an affine one. 
Therefore, it allows a group lift to
$GL\big(\mathfrak{her}_0(N)\big)$ which henceforth is
referred to as ($\Sigma$), i.e.
\begin{equation}\label{eqn:SIGMA}
(\Sigma)\quad
\dot X(t) = -\mathcal L_{u(t)} \, X(t) \,,\;
X(0) \in GL\big(\mathfrak{her}_0(N)\big)\;.
\end{equation}
The corresponding group lift in the affine case is more involved
\cite{DiHeGAMM08,DHKS08}. 
Now, the {\em system semigroup} $\bP_\Sigma$ associated to ($\Sigma$)
reads
\begin{eqnarray}\label{eqn:Psemi}
\bP_\Sigma=\langle T_u(t)=
\exp(-t \mathcal L_u)\,|\,t\geq 0, u \in \R{}^m \rangle_S
\end{eqnarray}
and lends itself to exemplify the notion of a Lie wedge.
%
%
%
%
To distinguish between different notions of controllability in open 
systems, we define three algebras: the \emph{control algebra} $\fk_c $,
the \emph{extended algebra} $\fk_d$, and the \emph{system algebra}
$\fs$ as follows
\begin{equation}\label{eqn:HamL-closure}
\begin{split}
\fk_c 
& := \expt{i\adr_{H_j}\,|\, j=1,\dots, m}_{\sf Lie}\;,\\[2mm]
\fk_d
& := \expt{i\adr_{H_d}, i\adr_{H_j}\,|\, j=1,\dots, m}_{\sf Lie}\;,\\[2mm]
\fs
&:= \expt{\mathcal L_u|\, u_j\in\mathbb R }_{\sf Lie}\\[2mm]
&= \expt{i\adr_{H_d}+\Gamma_L, i\adr_{H_j}|\, j=1,\dots, m}_{\sf Lie}\;.
\end{split}
\end{equation}
Note that $\fs$ is different from $\fk_d$, because it
contains the entire
drift term $(i\adr_{H_d}+\Gamma_L)$ for the Lie closure,
while $\fk_d$ only takes its Hamiltonian component $i\adr_{H_d}$.
Then $(\Sigma)$ is said to fulfill condition (H), (WH), and (A),
respectively, if
\begin{eqnarray}
&(H) \qquad
\fk_c &=\;  \adr_{\su(N)} \\
&(W\negthinspace H)\quad
\fk_d &=\;  \adr_{\su(N)} \;\text{while}\;\;\fk_c \neq \adr_{\su(N)}\\
&(A)\qquad
\fs &=\; \gl(\herm_0(N))
\end{eqnarray}
While condition (A) respects a standard construction of
non-linear control theory \cite{Jurdjevic97,JS72} to express
accessibility, conditions (H) and (WH)
serve to  characterize different types of {\em controllability}
of the Hamiltonian part of ($\Sigma$) in the absence
of relaxation: Condition (H) says that the Hamiltonian part is
fully controllable even {\em without} resorting to the drift
Hamiltonian, whereas condition (WH) yields full controllabilty
of the Hamiltonian part with the drift Hamiltonian being {\em necessary}.
We refer to the first scenario as \emph{(fully) $H$-controllable}
and to the second as satisfying the (WH)-condition.
Generically, open systems ($\Sigma$) given by \eqref{eqn:SIGMA}
meet the accessibility condition (A) \cite{Alt04,Diss-Indra}.

%
Finally, note that via
$\e^{i\adr_{H}}(\rho) = \e^{i H} \rho\, \e^{-iH}$
the Lie algebra $\adr_{\su(N)}$ generates the Lie group
$\Adr_{SU(N)} \iso PSU(N)$ here acting on $\herm_0(N)$
by conjugation.


\subsection[Lie Wedges for Coherently Controlled GKS"~Master
Equations]{Computing Lie Wedges for Controlled
Master Equations}
\label{subsec:LW-comp}

Here, the goal is to determine the (global) Lie wedge
of a coherently controlled unital open system ($\Sigma$) given in terms
of its Markovian master equation \eqref{eqn:SIGMA} of GKS"~Lindblad
form. 
In view of the examples worked out in detail in Sec.~\ref{sec:geoR3},
here we sketch how to approximate a Lie wedge of a controlled
Markovian systems in two ways, (i) by an {\em inner approximation}
and (ii) by an {\em outer approximation} thus following
\cite{Lawson99,DHKS08}.
Moreover for unital systems, we present two results
which guarantee that the {\em inner approximation} is
global and thus coincides with the Lie wedge
$L(\overline{\bP}_\Sigma)$ sought for.

Let ($\Sigma$) be a unital open control system as in
\eqref{eqn:SIGMA} where, for simplicity, the system algebra
$\fs$ fulfills the accessibility condition (A).
Moreover, let
\begin{equation}
\Omega_\Sigma := \{\cLu | u \in \R{}^m\} \subset
\gl(\mathfrak{her}_0(N))
\end{equation}
be the set of all directions specified by \eqref{eqn:SIGMA}.
The {\em reachable set} $\Reach(\Omega_\Sigma,\unity)$
of ($\Sigma$) is defined
as the set of all states $X(T)$, $T\geq 0$ that can be reached
from the unity $X(0)=\unity$ under the dynamics of ($\Sigma$),
while the controls $u(t) \in \R{}^m$ are assumed to be
piecewise constant functions. In general, one could
allow for larger classes of {\em admissible} controls,
such as locally bounded or locally integrable ones. Yet, the closure
of the corresponding reachable sets will not differ
\cite{JS72,S83,Elliott09}.

Clearly, $\Reach(\Omega_\Sigma,\unity)$ takes the form of a
subsemigroup within the embedding Lie group $GL(\mathfrak{her}_0(N))$
in the sense of Sec.~\ref{sec:LieSemiBasics}.  For instance,
restricting the control amplitudes $\{u_j\}$ to be piecewise constant yields the equality
$\Reach(\Omega_\Sigma,\unity) = \bP_\Sigma$. 
More generally,
the following result holds.

\medskip
\begin{theorem}[\cite{Lawson99}]\label{thm:L-saturate}
Let $\bP_\Sigma$ be defined as in \eqref{eqn:Psemi}.
Then
\begin{equation}\label{reach}
\overline{\bP}_\Sigma \;=\; \overline{\Reach(\Omega_\Sigma,\unity)}
\;=\; \overline{\Reach(L(\bP_\Sigma),\unity)}\;.
\end{equation}
In particular, $\overline{\bP}_\Sigma$ is a Lie subsemigroup.
Furthermore, $L(\overline{\bP}_\Sigma)$ is the smallest global
Lie wedge containing $\Omega_\Sigma$ as well as
the largest subset $\Omega'$ of $\gl(\mathfrak{her}_0(N))$
which satisfies the equality
\begin{equation}
\overline{\Reach(\Omega',\unity)} \;=\;
\overline{\Reach(\Omega_\Sigma,\unity)}\;.
\end{equation}
%
%
Due to the last property, 
the Lie wedge $L(\overline{\bP}_\Sigma)$
is also called the {\em Lie saturate} of $\Omega_\Sigma$,
cf.~\cite{JurdKupka81a,JurdKupka81b,Lawson99}.
\end{theorem}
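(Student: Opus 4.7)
The plan is to establish the three equalities in display \eqref{reach} by a chain of inclusions, then harvest the subsemigroup and Lie-wedge statements from them.

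First I would verify the outer equality $\overline{\bP}_\Sigma = \overline{\Reach(\Omega_\Sigma,\unity)}$ directly from the definitions: since admissible controls are piecewise constant, every trajectory endpoint from $\unity$ is a finite product of flows $\exp(-t_i \mathcal L_{u_i})$, so $\Reach(\Omega_\Sigma,\unity) = \bP_\Sigma$ already without any closure. For the middle equality, I would use the two monotone inclusions $\Omega_\Sigma \subseteq L(\bP_\Sigma)$ (because each $-\mathcal L_u$ generates a one-parameter semigroup inside $\bP_\Sigma$) and $\Reach(L(\bP_\Sigma),\unity)\subseteq \bP_\Sigma$ (because $\exp(tA)\in\bP_\Sigma$ whenever $A\in L(\bP_\Sigma)$, and $\bP_\Sigma$ is semigroup-closed). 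Sandwiching yields the chain \eqref{reach} after closure.

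Next I would show that $\overline{\bP}_\Sigma$ is a Lie subsemigroup by verifying $\overline{\bP}_\Sigma=\overline{\langle\exp L(\overline{\bP}_\Sigma)\rangle}_S$. The inclusion ``$\supseteq$'' is immediate: $\exp L(\overline{\bP}_\Sigma)\subseteq \overline{\bP}_\Sigma$ by the very definition of $L$, and $\overline{\bP}_\Sigma$ is a closed semigroup, so it contains the whole semigroup generated by $\exp L(\overline{\bP}_\Sigma)$ and its closure. For ``$\subseteq$'' I would invoke monotonicity $L(\bP_\Sigma)\subseteq L(\overline{\bP}_\Sigma)$ and the chain \eqref{reach} just established, which gives
\begin{equation*}
\overline{\bP}_\Sigma \;=\; \overline{\Reach(L(\bP_\Sigma),\unity)} \;\subseteq\; \overline{\Reach(L(\overline{\bP}_\Sigma),\unity)} \;=\; \overline{\langle\exp L(\overline{\bP}_\Sigma)\rangle}_S.
\end{equation*}
Since the tangent cone at $\unity$ of any closed subsemigroup is a Lie wedge (recalled in Sec.~\ref{sec:LieSemiBasics}), $L(\overline{\bP}_\Sigma)$ is automatically a global Lie wedge, being realised by the Lie subsemigroup $\overline{\bP}_\Sigma$ itself.

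For the final two extremal characterisations I would use a standard minimality/maximality pincer. For \emph{smallest global Lie wedge containing} $\Omega_\Sigma$: take any global Lie wedge $\fw = L(\bS)$ with $\Omega_\Sigma\subseteq\fw$; then $\exp(-t\mathcal L_u)\in \bS$ for all $t\ge 0$ and $u$, hence $\bP_\Sigma\subseteq \bS$, and closing gives $\overline{\bP}_\Sigma\subseteq\bS$ (as $\bS$ is closed), so $L(\overline{\bP}_\Sigma)\subseteq L(\bS)=\fw$. For \emph{largest set} $\Omega'$ \emph{preserving the reachable closure}: if $\overline{\Reach(\Omega',\unity)}=\overline{\bP}_\Sigma$, then for every $A\in\Omega'$ and $t\ge 0$ we have $\exp(tA)\in\overline{\bP}_\Sigma$, so $A\in L(\overline{\bP}_\Sigma)$; and $\Omega^\ast:=L(\overline{\bP}_\Sigma)$ itself attains this reachable closure by the chain \eqref{reach} together with the inclusions $\Omega_\Sigma\subseteq\Omega^\ast\subseteq$ \{directions whose flow lies in $\overline{\bP}_\Sigma$\}.

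The main obstacle I anticipate is the careful bookkeeping of closures in the step $L(\bP_\Sigma)\subseteq L(\overline{\bP}_\Sigma)$: a priori this inclusion could be strict, and one must check that the enlargement from $\bP_\Sigma$ to $\overline{\bP}_\Sigma$ does not spoil the semigroup-closure argument. This is precisely where the Lawson-type normalising step enters, ensuring that passing to the closure and to the tangent cone commute sufficiently to make the sandwich in \eqref{reach} work; the rest of the theorem then unfolds mechanically from this core fact.
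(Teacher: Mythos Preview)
The paper does not supply a proof of Theorem~\ref{thm:L-saturate}; it is quoted as a result from \cite{Lawson99} (with antecedents in \cite{JurdKupka81a,JurdKupka81b}) and used as a black box. So there is no ``paper's own proof'' to compare against.

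That said, your outline is sound and follows the standard Lawson--Jurdjevic--Kupka route. The sandwich
\[
\Reach(\Omega_\Sigma,\unity)\;\subseteq\;\Reach\big(L(\bP_\Sigma),\unity\big)\;\subseteq\;\bP_\Sigma\;=\;\Reach(\Omega_\Sigma,\unity)
\]
in fact holds \emph{before} closure (as the paper itself notes that $\Reach(\Omega_\Sigma,\unity)=\bP_\Sigma$ for piecewise-constant controls), which immediately gives the middle equality in \eqref{reach}. Your derivation of the Lie-subsemigroup property and of the two extremal characterisations is correct; in the ``smallest global Lie wedge'' step you implicitly use that a Lie subsemigroup is closed, which follows directly from its definition $\bS=\overline{\langle\exp L(\bS)\rangle}_S$. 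The closure issue you flag at the end is real in general semigroup theory, but here it is harmless: the inclusion $L(\bP_\Sigma)\subseteq L(\overline{\bP}_\Sigma)$ is tautological from $\bP_\Sigma\subseteq\overline{\bP}_\Sigma$, and that is all your sandwich needs.
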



\medskip
Unfortunately, for an arbitrary system ($\Sigma$), currently no
procedure is known to explicitly determine its global Lie wedge.
Yet there is a straightforward strategy to compute
an {\em inner approximation} \cite{Lawson99,DHKS08}.
It consists of the following steps:

\begin{enumerate}
\item[(1)]
form the smallest closed convex cone $\fw$ containing $\Omega_\Sigma$;
\item[(2)]
compute the edge $E(\fw)$ of the wedge and the
smallest Lie algebra $\fe$ containing $E(\fw)$,
i.e.~$\fe :=\langle E(\fw) \rangle_{\rm Lie}$;
\item[(3)] make the wedge invariant under the $\Adr$"~action
of $\mathfrak{e}$ by forming the set
$\bigcup_{A \in \mathfrak{e}} \Adr_{\exp A}(\fw)$;
\item[(4)]
update by taking the convex hull $\conv\{S\}$ of the set $S$ obtained in step (3);
\item[(5)]
repeat steps (2) through (4) until nothing new is added:
the resulting final wedge $\fw_0$ is henceforth referred
to as {\em inner approximation} to the global Lie wedge
$L(\overline{\bP}_\Sigma)$.
\end{enumerate}

Now, the crucial question arises whether the inner approximation
$\fw_0$ is global or not. If it is global, Theorem~\ref{thm:L-saturate}
guarantees that $\fw_0$ is equal to $L(\overline{\bP}_\Sigma)$.
Next we present two results which proved quite helpful to decide
the globality problem: The first one 
yields a global {\em outer approximation} $\fw^0$ of
$L(\overline{\bP}_\Sigma)$.
Combining inner and outer approximation, the Lie wedge
$L(\overline{\bP}_\Sigma)$ sought for can be determined via the
inclusions
\begin{equation}
\fw_0 \subseteq L(\overline{\bP}_\Sigma) \subseteq \fw^0.
\end{equation}
Clearly, if the outer and inner approximations coincide, one is done.
The second one 
based on the so-called \emph{Principal Theorem of Globality}
from \cite{HHL89} (see also Appendix~A) 
provides
a \/`direct\/' method for proving globality. It will be the key
tool to show that the inner approximations given in the worked
examples of Secs.~\ref{sec:geoR3} and \ref{sec:q-channels} are
in fact {\em global} Lie wedges.

\medskip
\begin{theorem}[\cite{DHKS08}]\label{thm:LW-outer}
Let ($\Sigma$) be a unital controlled open system
as in \eqref{eqn:SIGMA}. If there exists a \emph{pointed}
cone $\mathfrak{c}$ in the set of all positive semidefinite
operators that act on $\mathfrak{her}_0(N)$ so that
\begin{enumerate}
\item[(1)] $\Gamma_L\in\fc$\\[-3mm]
\item[(2)] $[\fc,\fc]\subset\adr_{\su(N)}$\\[-3mm]
\item[(3)] $[\fc,\adr_{\mathfrak{su}(N)}]\subset (\fc-\fc)$\\[-3mm]
\item[(4)] $\Adr_U\fc\Adr_{U^\dagger}\subset \fc$ for all $U\in SU(N)$,
\end{enumerate}
then the subsemigroup associated to ($\Sigma$) follows the inclusion
$\overline{\bP}_\Sigma\subseteq\Adr_{SU(N)}\cdot\exp(-\mathfrak{c})$
and hence its Lie wedge obeys the relation 
$L(\overline{\bP}_\Sigma)\subseteq\adr_{\mathfrak{su}(N)}\oplus(-\mathfrak{c})$,
i.e.~$\adr_{\su(N)}\oplus(-\fc)$ is a
global outer approximation to $L(\overline{\bP}_\Sigma)$.
\end{theorem}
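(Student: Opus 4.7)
The plan is to exhibit $\bK:=\Adr_{SU(N)}\cdot\exp(-\mathfrak{c})$ as a topologically closed subsemigroup of $GL(\mathfrak{her}_0(N))$ which contains every one-parameter semigroup $t\mapsto\exp(-t\mathcal{L}_u)$; the claimed Lie-wedge inclusion then drops out by differentiating at the identity.

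First I would process the algebraic input. Conditions~(2) and~(3) turn $\mathfrak{l}:=\adr_{\su(N)}+(\mathfrak{c}-\mathfrak{c})$ into a Lie subalgebra of $\gl(\mathfrak{her}_0(N))$ with a Cartan-type splitting $\mathfrak{l}=\mathfrak{k}\oplus\mathfrak{p}$ into the anti-Hermitian superoperators $\mathfrak{k}:=\adr_{\su(N)}$ and the Hermitian ones $\mathfrak{p}:=\mathfrak{c}-\mathfrak{c}$. Pointedness of $\mathfrak{c}$ together with the orthogonality of $\mathfrak{k}$ and $\mathfrak{p}$ with respect to the trace inner product shows that $\fw:=\adr_{\su(N)}\oplus(-\mathfrak{c})$ is a wedge with edge exactly $\adr_{\su(N)}$, and condition~(4) upgrades it to a Lie wedge.

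The core technical step is to show that $\bK$ is a closed subsemigroup. Given a product $\Adr_{U_1}e^{-c_1}\Adr_{U_2}e^{-c_2}$, condition~(4) lets me recast it as $\Adr_{U_1 U_2}\,e^{-c_1'}e^{-c_2}$ with $c_1':=\Adr_{U_2^\dagger}c_1\Adr_{U_2}\in\mathfrak{c}$, reducing the claim to showing that a product of two ``positive'' exponentials $e^{-c_1'}e^{-c_2}$ stays in $\bK$. Here I would invoke the Cartan-like bracket structure from~(2), (3): it yields a global $KAK$-decomposition on the connected Lie group generated by $\exp(\mathfrak{l})$, and combined with the $\Adr_{SU(N)}$-invariance and pointedness of $\mathfrak{c}$ this forces the positive factor of the decomposition still to lie in $\exp(-\mathfrak{c})$. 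Alternatively, one may verify the hypotheses of the Principal Theorem of Globality of \cite{HHL89} (cited in the text) directly on $\fw$; this certifies that $\fw$ is global and that its associated closed Lie subsemigroup is precisely $\bK$. Closedness of $\bK$ as a set then follows from compactness of $\Adr_{SU(N)}$ together with closedness of the cone $-\mathfrak{c}$.

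With $\bK$ a closed subsemigroup, the remainder is a Lie--Trotter absorption: for any admissible control $u\in\R{}^m$,
\begin{equation*}
e^{-t\mathcal{L}_u} \;=\; \lim_{n\to\infty}\bigl(e^{-ti\adr_{H_u}/n}\,e^{-t\Gamma_L/n}\bigr)^n ,
\end{equation*}
where $e^{-ti\adr_{H_u}/n}\in\Adr_{SU(N)}\subset\bK$ and, by condition~(1), $e^{-t\Gamma_L/n}\in\exp(-\mathfrak{c})\subset\bK$. The semigroup property places every finite product in $\bK$, closedness secures the limit, hence $\bP_\Sigma\subseteq\bK$ and therefore $\overline{\bP}_\Sigma\subseteq\bK$. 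Taking tangent cones at $\unity$ then yields $L(\overline{\bP}_\Sigma)\subseteq L(\bK)=\adr_{\su(N)}\oplus(-\mathfrak{c})$. The real obstacle throughout is the semigroup property of $\bK$: the conjugation identity cleanly reduces everything to the single question ``does $e^{-c_1}e^{-c_2}$ remain in $\bK$?'', and it is exactly here that conditions~(2)--(4) must do nontrivial work, either via an explicit $KAK$-decomposition or through the globality machinery of \cite{HHL89}.
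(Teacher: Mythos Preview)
The paper does not prove this theorem; it is quoted from \cite{DHKS08} without argument, so there is no in-paper proof to compare against. Structurally your plan is sound: reduce to the semigroup property of $\bK:=\Adr_{SU(N)}\cdot\exp(-\mathfrak{c})$, absorb $\bP_\Sigma$ into $\bK$ via Lie--Trotter, then pass to tangent cones. The conjugation trick using condition~(4) correctly reduces the semigroup property to the single question whether $e^{-c_1}e^{-c_2}\in\bK$ for $c_1,c_2\in\mathfrak{c}$.

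The gap is exactly where you flag it, and neither of your two proposals closes it. Conditions~(2)--(3) do yield a Cartan-type splitting $\mathfrak{k}\oplus\mathfrak{p}$ with $\mathfrak{k}=\adr_{\su(N)}$ and $\mathfrak{p}=\mathfrak{c}-\mathfrak{c}$, hence a global polar factorisation $e^{-c_1}e^{-c_2}=V\,P$ with $V\in\Adr_{SU(N)}$ and $\log P\in\mathfrak{p}$; from $\|e^{-c_i}\|\leq 1$ one also gets $-\log P\geq 0$. But ``positive semidefinite and lying in $\mathfrak{c}-\mathfrak{c}$'' does \emph{not} in general force membership in $\mathfrak{c}$: a pointed $\Adr_{SU(N)}$-invariant cone need not contain every positive-semidefinite element of its own linear span, so your appeal to ``invariance and pointedness'' is not yet an argument. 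The Globality-Theorem alternative has the same shortfall for the first conclusion: establishing that $\mathfrak{w}=\adr_{\su(N)}\oplus(-\mathfrak{c})$ is global (e.g.\ via $\varphi(X)=-\langle X,X\rangle$ as in the sketch for Theorem~\ref{thm:globality-3}) does give $L(\overline{\bP}_\Sigma)\subseteq\mathfrak{w}$, since $\overline{\bP}_\Sigma\subseteq\overline{\langle\exp\mathfrak{w}\rangle}_S$; but identifying the Lie subsemigroup $\overline{\langle\exp\mathfrak{w}\rangle}_S$ with the \emph{set} $\Adr_{SU(N)}\cdot\exp(-\mathfrak{c})$ again presupposes that the latter is a semigroup. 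So you have a viable route to the Lie-wedge inclusion, but the stronger semigroup inclusion $\overline{\bP}_\Sigma\subseteq\Adr_{SU(N)}\cdot\exp(-\mathfrak{c})$ still hinges on the unproved step.
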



\medskip
\begin{corollary}(\cite{DHKS08,Alt03})\label{HcontrolGlobalWedge}
Let ($\Sigma$) be a unital 
single-qubit system satisfying condition (H) 
with a generic\footnote{In \cite{DHKS08} Cor.~\ref{HcontrolGlobalWedge} 
is stated under the above genericity
assumption; yet one can drop this additional condition.} 
Lindblad term $\Gamma_L$.
Then $\overline \bP_{\Sigma}=\Adr_{SU(2)}\cdot\exp(-\fc)$,
where the cone 
\begin{eqnarray*}
\fc:=\R{}_0^+\conv\left\{\Adr_U\,\Gamma_L\,\Adr_{U^\dagger} \;|\;
U\in SU(2)\right\}
\end{eqnarray*}
is contained in the set of all positive semidefinite elements in
$\mathfrak{gl}(\mathfrak{her}_0)$. Furthermore
$L(\overline{\textbf{P}}_{\Sigma})=
\adr_{\mathfrak{su}(2)}\oplus(-\mathfrak{c})$.
\end{corollary}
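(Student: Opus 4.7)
The plan is to pinch $L(\overline{\bP}_\Sigma)$ from both sides: Theorem~\ref{thm:LW-outer} will give an outer bound $L(\overline{\bP}_\Sigma)\subseteq\adr_{\su(2)}\oplus(-\fc)$, while condition $(H)$ together with the edge-invariance built into every Lie wedge will yield the matching inner bound. Once the two Lie wedges coincide, the semigroup identity $\overline{\bP}_\Sigma=\Adr_{SU(2)}\cdot\exp(-\fc)$ falls out: one inclusion is part of the conclusion of Theorem~\ref{thm:LW-outer}, and the other follows directly from the definition $L(\bS)=\{A\,|\,\exp(tA)\in\bS\text{ for all }t\ge 0\}$ together with closure of $\overline{\bP}_\Sigma$ under products.

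For the outer inclusion I would verify the four hypotheses of Theorem~\ref{thm:LW-outer}. Hypothesis~(1) is immediate upon taking $U=\unity$, and (4) follows from $\Adr_V\Adr_U=\Adr_{VU}$ combined with stability under the operation $\R{}_0^+\conv$. Modulo shifting any unitary Lamb-shift piece of the dissipator into $H_d$, the unital qubit $\Gamma_L$ is self-adjoint and positive semidefinite on $\herm_0\cong\R{3}$, so every element of $\fc$ is a symmetric psd $3\times 3$ matrix. Consequently commutators of two elements of $\fc$ are antisymmetric, and antisymmetric $3\times 3$ matrices coincide with the image of $\adr_{\su(2)}$ under the double cover $SU(2)\to SO(3)$; this gives~(2). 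For~(3) I would argue directly from~(4): for any $A\in\adr_{\su(2)}$ and $C\in\fc$, the curve $t\mapsto\Adr_{\exp(tA)}C$ stays in $\fc$, so its derivative $[A,C]$ at $t=0$ lies in the closed subspace $\fc-\fc$. Pointedness of $\fc$ is inherited from the psd cone, and closedness follows since the compact orbit $\{\Adr_U\Gamma_L\Adr_{U^\dagger}\}$ has constant positive trace and hence keeps $0$ at positive distance from the generating set.

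For the inner inclusion, condition~$(H)$ supplies $\Adr_{SU(2)}\subseteq\overline{\bP}_\Sigma$, so $\pm\adr_{\su(2)}\subseteq L(\overline{\bP}_\Sigma)$, with $\adr_{\su(2)}$ sitting in the edge $E(L(\overline{\bP}_\Sigma))$. Since $-\mathcal{L}_0=-i\adr_{H_d}-\Gamma_L\in\Omega_\Sigma\subseteq L(\overline{\bP}_\Sigma)$ and $-i\adr_{H_d}$ lies in the edge, adding these two elements inside the cone yields $-\Gamma_L\in L(\overline{\bP}_\Sigma)$. The defining edge-invariance $e^{\adr_A}L(\overline{\bP}_\Sigma)=L(\overline{\bP}_\Sigma)$ for $A\in E$ then transports $-\Gamma_L$ along the whole $SU(2)$-orbit, so $-\Adr_U\Gamma_L\Adr_{U^\dagger}\in L(\overline{\bP}_\Sigma)$ for every $U\in SU(2)$. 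Closing under positive real combinations, which a closed convex cone is stable under, produces $-\fc\subseteq L(\overline{\bP}_\Sigma)$, and combined with $\adr_{\su(2)}\subseteq L(\overline{\bP}_\Sigma)$ forces $L(\overline{\bP}_\Sigma)=\adr_{\su(2)}\oplus(-\fc)$.

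I expect the main technical obstacle to be the self-adjointness of $\Gamma_L$ on $\herm_0$ used in hypothesis~(2): one has to argue that any skew-Hermitian component of the dissipator can be rewritten as an extra coherent term and absorbed into $H_d$ without changing $\overline{\bP}_\Sigma$, after which $\Gamma_L$ becomes a genuine symmetric psd operator on the Bloch space $\R{3}$. A smaller but important point---implicit in the footnote about dropping the genericity assumption---is that the deduction of~(3) from~(4) above is purely structural and never uses that the $\Adr_U$-orbit of $\Gamma_L$ spans the full space of symmetric $3\times 3$ matrices, so the corollary survives non-generic spectra of $\Gamma_L$.
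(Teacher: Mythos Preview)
Your argument is correct and is precisely the route the paper intends: the corollary is stated without an explicit proof (deferred to \cite{DHKS08,Alt03}) as an immediate consequence of Theorem~\ref{thm:LW-outer}, and your two-sided squeeze---outer inclusion via Theorem~\ref{thm:LW-outer} after absorbing the Lamb-shift part into $H_d$, inner inclusion via condition~(H) feeding $\adr_{\su(2)}$ into the edge and then using $\Adr$-invariance---is exactly the paper's inner/outer approximation scheme of Sec.~\ref{subsec:LW-comp} specialised to $\fk_c=\adr_{\su(2)}$. The only remark worth adding is that the paper also offers Theorem~\ref{thm:globality-3} as an alternative one-shot route to globality (used in Proposition~\ref{example1globality}), which bypasses the explicit verification of hypotheses~(2) and~(3) of Theorem~\ref{thm:LW-outer} at the cost of assuming accessibility~(A).
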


\medskip
\begin{theorem}\label{thm:globality-3}
Let ($\Sigma$) be a unital controlled open system
given by \eqref{eqn:SIGMA}. In addition assume that
($\Sigma$) meets the accessibility condition (A) 
and that the Lie subgroup $\bK$ which corresponds
to the control algebra $\fk_c$ is closed within $SU(N)$. Then,
$\fw := \fk_c \oplus (-\fc)$ is a {\em global} Lie wedge in
$\gl(\mathfrak{her}_0(N))$, where
$
\fc :=
\R{}_0^+\conv \big\{
\Adr_{U}\,\big(\adr_{{\rm i}H_d}+\Gamma_L\big)\,\Adr^\dagger_{U}
\;|\; U \in \bK \big\}\;.
$
Moreover, $\fw$ is the global Lie wedge of ($\Sigma$), i.e.
$\fw = L(\overline{\bP}_\Sigma)$.
\end{theorem}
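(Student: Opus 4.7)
My approach is to combine the inner-approximation procedure of Sec.~\ref{subsec:LW-comp}, which yields $\fw \subseteq L(\overline{\bP}_\Sigma)$, with a direct construction of a closed Lie subsemigroup $\bS \subseteq \overline{\bP}_\Sigma$ whose tangent wedge at the identity equals $\fw$. Via the Principal Theorem of Globality from Appendix~A, this forces $L(\overline{\bP}_\Sigma) = \fw$ and simultaneously certifies that $\fw$ is global.

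\emph{Step 1: Inner approximation yields $\fw$.}
Starting from $\Omega_\Sigma = \{\cLu \mid u \in \R{m}\}$ I form the closed convex cone $\fw_1 := \overline{\conv}\,\R{}_0^+ \Omega_\Sigma$. Rescaling $u_j \to \pm\infty$ in $-\cLu = -(i\adr_{H_d}+\Gamma_L)-\sum_j u_j\, i\adr_{H_j}$ and taking limits shows $\pm i\adr_{H_j}\in\fw_1$; hence $\spanR\{i\adr_{H_j}\}\subseteq E(\fw_1)$, and its Lie closure is $\fk_c$. Carrying out step~(3), I conjugate $\fw_1$ by $\Adr_\bK$. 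Here the closedness of $\bK$ in $SU(N)$ is essential: it guarantees that $\bK = \overline{\langle \exp\fk_c\rangle}$ is a genuine Lie subgroup of $SU(N)$ acting smoothly on $\gl(\herm_0(N))$. Because $\fk_c$ is itself $\Adr_\bK$-invariant, the edge remains $\fk_c$, while the drift-plus-Lindblad part is dragged through $\Adr_\bK$ and then convexly combined in step~(4), yielding precisely the cone $-\fc$. Since $\fc$ is $\Adr_\bK$-invariant by construction, the procedure stabilises immediately, and I conclude $\fw \subseteq L(\overline{\bP}_\Sigma)$.

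\emph{Step 2: $\fw$ is a Lie wedge.}
I verify pointedness of $\fc$ and transversality $\fk_c \cap \fc=\{0\}$ by decomposing operators on $\gl(\herm_0(N))$ into symmetric and antisymmetric parts with respect to the Hilbert--Schmidt inner product. Each generator $\Adr_U(i\adr_{H_d}+\Gamma_L)\Adr_{U^\dagger}$ has the positive-semidefinite symmetric part $\Adr_U\Gamma_L\Adr_{U^\dagger}$, a property preserved by conic combinations; the skew part never vanishes non-trivially against a $\fk_c$-element. Hence $\fk_c \oplus (-\fc)$ is a well-defined direct sum with edge $E(\fw)=\fk_c$. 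Invariance under $\Adr_{\exp E(\fw)} = \Adr_\bK$ is built into the definition of $\fc$, so $\fw$ is indeed a Lie wedge.

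\emph{Step 3: Globality via a concrete Lie subsemigroup.}
Consider the candidate
\begin{equation*}
\bS \;:=\; \overline{\bK \cdot \exp(-\fc)} \;\subseteq\; GL(\herm_0(N)).
\end{equation*}
Containment $\bS \subseteq \overline{\bP}_\Sigma$ follows from Step~1 and Theorem~\ref{thm:L-saturate}. Closure under composition uses the $\Adr_\bK$-invariance of $\fc$ via
\begin{equation*}
(k_1 e^{-c_1})(k_2 e^{-c_2}) \;=\; (k_1 k_2)\, e^{-\Adr_{k_2^{-1}} c_1}\, e^{-c_2},
\end{equation*}
combined with a Lie--Trotter argument showing that $e^{-c_1'} e^{-c_2}$ with $c_1',c_2 \in \fc$ can be approximated within $\overline{\bP}_\Sigma$ by flows with time-varying generator in $\fc$. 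The sandwich $\fw \subseteq L(\overline{\bP}_\Sigma) \subseteq L(\bS)$ from Step~1 together with the reverse inclusion $L(\bS)\subseteq \fw$ (which holds since $\bS$ is, by construction, locally generated by $\exp \fw$) pins down $L(\bS)=\fw$. The Principal Theorem of Globality (Appendix~A) then certifies $\fw$ as global, and Theorem~\ref{thm:L-saturate} identifies it with $L(\overline{\bP}_\Sigma)$.

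\emph{Main obstacle.} The delicate point is the Trotter-type closure argument in Step~3: products $e^{-c_1}e^{-c_2}$ generally do not lie in $\exp(-\fc)$, so one must show that they nevertheless remain inside $\overline{\bK\cdot\exp(-\fc)}$ without picking up directions outside $\fw$. This is precisely where accessibility (A) is used — to forbid spurious higher-order Baker-Campbell-Hausdorff contributions from enlarging $L(\bS)$ — and where closedness of $\bK$ is needed so that the factorisation $\bK \cdot \exp(-\fc)$ is topologically well-behaved; without either hypothesis, the inner approximation $\fw$ could fail to be global and one would at best obtain the outer bound of Theorem~\ref{thm:LW-outer}.
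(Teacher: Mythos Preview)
Your Steps~1 and~2 track the paper's Steps~(1) and~(2) closely and are fine. The problem is Step~3, where the argument collapses.

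First, the sandwich is oriented the wrong way. From $\bS \subseteq \overline{\bP}_\Sigma$ you get $L(\bS) \subseteq L(\overline{\bP}_\Sigma)$, not $L(\overline{\bP}_\Sigma) \subseteq L(\bS)$ as you write. So the chain $\fw \subseteq L(\overline{\bP}_\Sigma) \subseteq L(\bS)$ does not hold, and nothing is pinned down. Second, the claim ``$L(\bS) \subseteq \fw$ since $\bS$ is locally generated by $\exp\fw$'' is precisely the globality statement you are trying to prove: a closed subsemigroup generated by $\exp\fw$ may well have a tangent wedge strictly larger than $\fw$ --- that is exactly what it means for $\fw$ to fail to be global. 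Invoking the Principal Theorem of Globality at the end is empty because you have not verified any of its hypotheses; in particular, you never produce the required exact strictly $\fw$-positive $1$-form.

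The paper's proof takes a completely different route for this crucial step. Rather than building a candidate subsemigroup, it exhibits an explicit potential function
\[
\varphi(X) \;=\; -\langle X, X\rangle \;=\; -\sum_{k} \tr\big(X(B_k)\,X(B_k)\big)
\]
on $GL(\herm_0(N))$ and checks directly that ${\rm d}\varphi(X)\,AX \geq 0$ for all $A \in \fw$ and all $X$, with strict inequality at $X=\unity$ for $A \in \fw\setminus E(\fw)$. This works because unitality forces the symmetric part of every element of $\fc$ to be positive semidefinite on $\herm_0(N)$, so the Hilbert--Schmidt norm is non-increasing along $-\fc$-directions and constant along $\fk_c$-directions. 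That single monotone function is what feeds the Principal Theorem of Globality; accessibility~(A) then just ensures that the analytic subgroup generated by $\fw$ is all of $GL(\herm_0(N))$, so no passage to a smaller $\bG_0$ is needed. Once globality of $\fw$ is established this way, Theorem~\ref{thm:L-saturate} (smallest global Lie wedge containing $\Omega_\Sigma$) together with your Step~1 gives $\fw = L(\overline{\bP}_\Sigma)$. Your Trotter/factorisation route never gets off the ground without an ingredient of this kind.
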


\begin{proof}(Sketch)
The full proof will be given elsewhere in a more general context.
For applying the \/`Principal Theorem of Globality\/' \cite{HHL89}
(see Appendix~A),
the following steps have to be established:
\begin{enumerate}
\item[(1)]
The edge of $\fw$ coincides with $\fk_c$.
\item[(2)]
$\fw$ is a Lie wedge in $\fg := \gl(\mathfrak{her}_0(N))$.
\item[(3)]
There exists a function $\varphi:GL(\mathfrak{her}_0(N)) \to \R{}$
such that its differential satisfies
${\rm d}\varphi(X) \, AX \geq 0$ for all
$X \in GL(\mathfrak{her}_0(N))$ and all $A \in \fw$.
\item[(4)]
The differential of $\varphi$ fulfills
${\rm d}\varphi(\unity) \, A > 0$ for all
$A \in \fw\setminus E(\fw)$. 
\end{enumerate}
Note that step (3) is the essential one, and an appropriate candidate
for $\varphi$ is given by
$
X \mapsto - \langle X,X \rangle
:=
- \sum_{k=1}^{N^2-1}\tr\big(X(B_k)X(B_k)\big)
\;,
$
where $B_1, \dots, B_{N^2-1}$ is any orthonormal basis of
$\mathfrak{her}_0(N)$.
\end{proof}
\medskip

\noindent
As a useful tool, 
we add the following Corollary,
which is put into a broader context in Appendix~A:
\medskip
\begin{corollary}[\cite{HHL89}]\label{HHLglobalcor}
%
Let $\bG$ be a Lie group with Lie algebra $\fg$ and let
$\fw_0\subseteq\fw$ be two Lie wedges in $\fg$.
Provided one has
$
   \fw_0\setminus-\fw_0\subseteq\fw \setminus -\fw 
$ [or equivalently $E(\fw_0)=E(\fw)\cap\fw_0$],
then $\fw_0$ is global in $\bG$ if the following conditions are
satisfied:
(i)
$\fw$ is global in $\bG$;
(ii)
the edge of $\fw_0$ 
is the Lie algebra of a closed Lie subgroup of $\bG$.
\end{corollary}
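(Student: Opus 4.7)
\emph{Proof plan.}
The plan is to apply the Principal Theorem of Globality (Appendix~A) to $\fw_0$, reusing the monotone witness function supplied by the globality of $\fw$. Two of the theorem's hypotheses are immediate for $\fw_0$: by assumption it is a Lie wedge, and by~(ii) its edge $E(\fw_0)$ integrates to a closed Lie subgroup of $\bG$. What remains is to exhibit a function $\varphi:\bG\to\R{}$ satisfying ${\rm d}\varphi(X)(AX)\geq 0$ for all $X\in\bG$ and $A\in\fw_0$, together with ${\rm d}\varphi(\unity)\,A>0$ for every $A\in\fw_0\setminus E(\fw_0)$.

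First I would invoke the converse direction of the Principal Theorem applied to the larger wedge $\fw$: since $\fw$ is global in $\bG$ by~(i), there exists a function $\varphi:\bG\to\R{}$ with ${\rm d}\varphi(X)(AX)\geq 0$ for all $X\in\bG$, $A\in\fw$, and ${\rm d}\varphi(\unity)\,A>0$ for all $A\in\fw\setminus E(\fw)$. Because $\fw_0\subseteq\fw$, the non-negativity on $\bG$ passes to $\fw_0$ verbatim, and it remains only to upgrade the strict positivity at $\unity$ from $\fw\setminus E(\fw)$ to $\fw_0\setminus E(\fw_0)$.

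This is precisely where the edge hypothesis does the work. Given $A\in\fw_0\setminus E(\fw_0)$, the equivalent form $E(\fw_0)=E(\fw)\cap\fw_0$ of the assumption forces $A\notin E(\fw)$ (otherwise $A\in E(\fw)\cap\fw_0=E(\fw_0)$, a contradiction); since $A\in\fw_0\subseteq\fw$, we obtain $A\in\fw\setminus E(\fw)$, and hence ${\rm d}\varphi(\unity)\,A>0$. All hypotheses of the Principal Theorem are thereby verified for $\fw_0$, yielding its globality in~$\bG$.

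The main obstacle --- and the only genuinely delicate step --- is the implicit appeal to the \emph{converse} direction of the Principal Theorem, namely that globality of a Lie wedge produces such a witness $\varphi$. If the formulation recalled in Appendix~A supplies only the sufficient direction, I would instead construct $\varphi$ directly from any Lie subsemigroup $\bS\subseteq\bG$ with $L(\bS)=\fw$, for instance via a suitably smoothed distance-to-$\bS$ function on $\bG$; the monotonicity along $\fw$-directions is built into the invariance $\bS\cdot\exp(t\fw)\subseteq\bS$ for $t\geq 0$, while strict positivity at $\unity$ off the edge reflects that elements of $\fw\setminus E(\fw)$ generate genuinely one-sided semigroup motion. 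Either way, the argument reduces the globality of $\fw_0$ to the globality of $\fw$ plus a purely linear edge-matching condition.
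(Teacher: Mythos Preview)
The paper does not give its own proof of this corollary: it is cited from \cite{HHL89} and presented in Appendix~A merely as ``a consequence of the Principal Globality Theorem,'' with no further argument. Your approach---pulling the exact $1$-form (equivalently, the potential $\varphi$) from the globality of $\fw$ via the equivalence (a)$\Leftrightarrow$(c) in Theorem~\ref{thm:PGT}, then checking that it serves as a witness for $\fw_0$---is precisely the intended derivation and is correct.

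Two small remarks. First, your caveat about the ``converse direction'' is unnecessary here: Theorem~\ref{thm:PGT} in Appendix~A is explicitly stated as an equivalence, so (a)$\Rightarrow$(c) is available and hands you the exact, $\fw$-positive, strictly $\fw$-positive $1$-form directly; no ad~hoc smoothed-distance construction is needed. Second, the version of the Principal Theorem recalled in the paper carries the side hypothesis that the analytic subgroup generated by $\langle \fw\rangle_{\sf Lie}$ be closed in $\bG$, and you implicitly need the analogue for $\langle \fw_0\rangle_{\sf Lie}$ when re-applying (c)$\Rightarrow$(a) to $\fw_0$. The corollary as stated does not assume this, but the paper itself warns that its PGT is given ``with minor simplifications''; the full statement in \cite{HHL89} handles this, and restricting the exact $1$-form from the larger analytic subgroup to the smaller one is harmless since exactness is preserved under pullback. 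Your edge computation $\fw_0\setminus E(\fw_0)\subseteq\fw\setminus E(\fw)$ from $E(\fw_0)=E(\fw)\cap\fw_0$ is exactly right and is the only place the extra hypothesis on edges enters.
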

\subsection*{Guideline through Applications}

For illustrating the power of the Lie-semigroup formalism by
applications, we follow a two-fold route: Sec.~\ref{sec:geoR3}
addresses three paradigmatic types of bilinear control systems on
$\mathbb R^3$, where the control parts of the dynamics generate
easy-to-visualise rotations in $SO(3)$.
{\em Thus Sec.~\ref{sec:geoR3} is meant to be readable without any background in
quantum mechanics}, yet it directly corresponds to single-qubit systems
undergoing relaxation as the presented examples coincide with the
so-called {\em coherence-vector representation} of such
systems \cite{AlickiLendi87}.
Therefore, the results obtained in Sec.~\ref{sec:geoR3} can readily be transferred
to Sec.~\ref{sec:q-channels}, where we address quantum channels
in the customary explicit $\su(2)$"~representation of qubits. By the isomorphism
$\so(3)\iso\su(2)$, the geometry in Sec.~\ref{sec:geoR3} thus illustrates
key results in Sec.~\ref{sec:q-channels} for qubit channels.

\section{Geometry of Open Systems in $\mathbb R^3$}\label{sec:geoR3}

In this section, we discuss three simple introductory examples of \/`open\/'
systems, the geometry of which can be envisaged as rotations in $\mathbb R^3$
concomitant to relaxation. To fix notations, define the following
\begin{equation}\label{eqn:R3rot_gen}
H_x:=\left[\begin{smallmatrix} 0 &0 &0 \\ 0 &0 &-1 \\ 0 &1 &0 \end{smallmatrix}\right]\;,\quad
H_y:=\left[\begin{smallmatrix} 0 &0 &1 \\ 0 &0 &0 \\ -1 &0 &0 \end{smallmatrix}\right]\;,\quad
H_z:=\left[\begin{smallmatrix} 0 &-1 &0 \\ 1 &0 &0 \\ 0 &0 &0 \end{smallmatrix}\right]\;\quad
\end{equation}
as generators of the rotations $R_\nu(\theta):={\rm e}^{\theta H_\nu}$ reading
\begin{equation}\label{eqn:R3rots}
\begin{split}
R_x(\theta) &:=\left[\begin{smallmatrix} 1 &0 &0 \\ 0& \ct &-\st \\ 0 &\st &\ct  \end{smallmatrix}\right],\;
R_y(\theta) :=\left[\begin{smallmatrix} \ct &0 &\st \\ 0 &1 &0 \\ -\st &0 &\ct \end{smallmatrix}\right],\\[2mm]
R_z(\theta) &:=\left[\begin{smallmatrix} \ct &-\st &0 \\ \st &\ct &0 \\ 0 &0 &1 \end{smallmatrix}\right]\;.
\end{split}
\end{equation}
So we have $\expt{H_x,H_y,H_z}_{\sf Lie}=\so(3)$ and thereby a basis for the skew-symmetric matrices
forming the $\fk$-part in the Cartan decomposition
$\gl(3,\mathbb R)=\so(3)\oplus\mathfrak{sym}(3)$, where the $\fp$-part is spanned
by the symmetric matrices
\begin{equation}\label{eqn:R3sym_gen}
p_x:=\left[\begin{smallmatrix} 0 &0 &0 \\ 0 &0 &1 \\ 0 &1 &0 \end{smallmatrix}\right]\;,\quad
p_y:=\left[\begin{smallmatrix} 0 &0 &1 \\ 0 &0 &0 \\ 1 &0 &0 \end{smallmatrix}\right]\;,\quad
p_z:=\left[\begin{smallmatrix} 0 &1 &0 \\ 1 &0 &0 \\ 0 &0 &0 \end{smallmatrix}\right]\;\quad
\end{equation}
and the diagonal $3 \times 3$-matrices $E_{ii} := e_ie_i^\top$ for $i=1,2,3$.
Recall that the skew-symmetric $\fk$-part and a symmetric $\fp$-part
obeying the usual commutator relations
$\comm \fk \fk \subseteq \fk$,
$\comm \fk \fp \subseteq \fp$ and $\comm \fp \fp \subseteq \fk$.
For later convenience, we note commutation relations for the above
basis in Tab.~\ref{tab:comm-tab}.

\subsection[{\em Example~1:} Fully H-Controllable System with General Relaxation Operator]
{{\bf Example~1}: Corresponds to a Qubit System with 
Condition (H) Satisfied and General Relaxation Operator}

Using definitions from above, consider the control system
in $GL(3,\mathbb R)$ given by the equation
\begin{eqnarray}\label{eq:ex1}
  \dot X=-(A+B_u)X \;,
\end{eqnarray}
where the control term $B_u:=u_xH_x+u_yH_y$ shall have independent
controls $u_x, u_y \in \R{}$, and the drift term
$A:=H_z+\Gamma_0$ is composed of a \/`Hamiltonian\/' component, $H_z$, and
a relaxation component given by the matrix
\begin{equation}\label{eqn:gamma-abc}
\Gamma_0:= \diag(a,b,c)
\end{equation}
with relaxation-rate constants $a,b,c\geq 0$.
Since $\expt{H_x,H_y}_{\sf Lie}=\so(3)$, system \eqref{eq:ex1}
satisfies in fact condition (H)
in the sense of Sec.~\ref{subsec:notions}
(i.e.~without resorting to the drift component $H_z$).


For explicitly computing the Lie wedge of \eqref{eq:ex1} we proceed
as in Sec.~\ref{subsec:LW-comp}. For the following calculations observe that
$H_x,H_y,H_z$ belong to the $\fk$-part, while $\Gamma_0$ is contained in
the $\fp$-part of \/`the\/' Cartan decomposition of $\gl(3,\R{})$.

\begin{table}[Ht!]
\caption{\label{tab:comm-tab}Commutation Table}
\begin{tabular}{c cccccc}
\hline\hline\\[-1.5mm]
$[H_\nu,(\cdot)]$ & $E_{11}$  & $E_{22}$ & $E_{33}$ & $p_x$ & $p_y$ & $p_z$\\[1mm]
\hline\\[-1mm]
$H_x$ &0 &$p_x$ &$-p_x$ &$-2\Delta_{23}$ &$-p_z$ &$p_y$ \\[0.5mm]
$H_y$ &$-p_y$ &$0$ &$p_y$ &$p_z$ &$-2\Delta_{31}$ &$-p_x$\\[0.5mm]
$H_z$ &$p_z$ &$-p_z$ &$0$ &$-p_y$ &$p_x$ &$-2\Delta_{12}$ \\[0.5mm]
\hline\hline\\[-1.5mm]
\multicolumn{3}{l} \mbox{define $\Delta_{ij}:=E_{ii}-E_{jj}$}
\end{tabular}
\end{table}

\medskip
\noindent
Step (1) of the algorithm gives the initial wedge approximation 
\begin{eqnarray}
\fw_1  := \R{}H_x \oplus \R{}H_y \oplus (-\fc_1)\;,
\end{eqnarray}
where $\fc_1 := \R{}^+_0 \big(H_z+\Gamma_0\big)$.
In step (2) one then readily finds
\begin{equation}
E(\fw_1) = \R{}H_x \oplus \R{}H_y
\end{equation}
so $\fe = \langle H_x, H_y\rangle_{\rm Lie} = \so(3)$.
Hence, step (3) and (4) give
\begin{equation}
\fw_0 = \so(3)\;\oplus\; \mathbb R^-_0\,
\conv\mathcal O_{SO(3)}(\Gamma_0) \;,
\end{equation}
where
$\mathcal O_{SO(3)}(\Gamma_0)
:=\{\Theta \Gamma_0 \Theta^\top\,|\, \Theta\in SO(3)\}$
denotes the {\em orthogonal orbit} of $\Gamma_0$. Here we used
the trivial fact that $\so(3)$ is $\Adr_{SO(3)}$-invariant.
By a well-known result of Uhlmann\footnote{The result mentioned
is originally stated for density matrices and $SU(N)$.
However, the proof in \cite{Ando89} immediately carries over
to symmetric matrices and $SO(N)$ \cite{Uhlm71,Ando89,MarshallOlkin}.}
the convex hull of the isospectral set $\mathcal O_{SO(3)}(\Gamma_0)$
simplifies to
\begin{equation}
\conv \mathcal O_{SO(3)} (\Gamma_0) =
\{S \in \mathfrak{sym}(3) \,|\, S \prec \Gamma_0\} =: \mathcal M(\Gamma_0)
\;.
\end{equation}
Defining the pointed convex cone
$\fc_0 := \mathbb R^+_0\,\mathcal M(\Gamma_0)$, we obtain
\begin{equation}
\fw_0 = \so(3)\;\oplus\; (-\fc_0)
\;,
\end{equation}
as final inner approximation to the global Lie wedge
of \eqref{eq:ex1}.

\medskip

\begin{lemma}\label{ex1edgeprop}
The set $\fw_0$ is a Lie wedge of $\mathfrak{gl}(3,\mathbb{R})$.
Its edge $E(\fw_0)$ is given by $\mathfrak{so}(3)$.
\end{lemma}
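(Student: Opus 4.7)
The plan is to verify in sequence the three properties that qualify $\fw_0 = \so(3) \oplus (-\fc_0)$ as a Lie wedge of $\gl(3,\R{})$: (i) that $\fw_0$ is a closed convex cone; (ii) that its edge equals $\so(3)$; and (iii) that $\fw_0$ is invariant under conjugation by the closed subgroup $SO(3)$ generated by the edge. Throughout I would lean on the Cartan decomposition $\gl(3,\R{}) = \so(3) \oplus \sym(3)$ together with the explicit description $\fc_0 = \R{}_0^+\,\mathcal M(\Gamma_0) = \R{}_0^+\,\conv \mathcal O_{SO(3)}(\Gamma_0)$.

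For (i), the orbit $\mathcal O_{SO(3)}(\Gamma_0)$ is the continuous image of the compact group $SO(3)$, hence compact; therefore $\mathcal M(\Gamma_0)$ is a compact convex subset of $\sym(3)$, and $\fc_0$ is a closed convex cone. Since $\so(3)$ and $\sym(3)$ are complementary, $\fw_0$ is the sum of a closed subspace and a closed convex cone in complementary directions, hence itself closed, convex and stable under non-negative scaling.

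For (ii) the inclusion $\so(3) \subseteq E(\fw_0)$ is trivial. Conversely, by uniqueness of the Cartan splitting, any $X \in \fw_0 \cap (-\fw_0)$ forces a symmetric component $S \in \fc_0 \cap (-\fc_0)$, so it suffices to show that $\fc_0$ is pointed. The clean observation is that majorization is trace-preserving, so every $S \in \mathcal M(\Gamma_0)$ satisfies $\tr S = a+b+c \geq 0$, a property that propagates by scaling to $\tr S \geq 0$ on all of $\fc_0$. In the generic case $a+b+c>0$ this immediately gives $\fc_0 \cap (-\fc_0) = \{0\}$, while the degenerate case $\Gamma_0 = 0$ makes $\fc_0$ trivial. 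Either way $E(\fw_0) = \so(3)$.

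For (iii), the required invariance $\Theta\fw_0\Theta^\top = \fw_0$ for $\Theta \in SO(3)$ splits along the Cartan decomposition: $\Theta\so(3)\Theta^\top = \so(3)$ is standard, and $\Theta\fc_0\Theta^\top = \fc_0$ follows because orthogonal conjugation preserves the spectrum of any symmetric matrix, so the majorization relation $S \prec \Gamma_0$ is $SO(3)$-invariant; hence $\Theta\mathcal M(\Gamma_0)\Theta^\top = \mathcal M(\Gamma_0)$, and the cone operation transports this to $\fc_0$. The only genuine point of attention in the whole argument is the pointedness step in (ii), where one has to use the hypothesis $a,b,c \geq 0$; the trace functional furnishes a one-line proof, whereas a direct spectral argument would be considerably more awkward. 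The remaining steps are routine once the invariance of eigenvalues under $SO(3)$-conjugation is recalled.
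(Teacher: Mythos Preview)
Your proof is correct and follows the same overall skeleton as the paper: use the Cartan splitting $\gl(3,\R{}) = \so(3) \oplus \sym(3)$ to reduce the edge computation to showing that $\fc_0$ is pointed, and note that $\Adr_{SO(3)}$-invariance is then automatic. The one substantive difference lies in how pointedness of $\fc_0$ is argued. The paper observes that $\mathcal M(\Gamma_0)$ consists entirely of positive semidefinite matrices (a consequence of the majorization constraints when $a,b,c\geq 0$), so $B,-B\in\fc_0$ forces $B=0$. You instead exploit the trace functional: since majorization preserves trace, every element of $\fc_0$ has trace $\lambda(a+b+c)\geq 0$, and this linear functional separates $\fc_0$ from $-\fc_0$ whenever $\Gamma_0\neq 0$. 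Your route is marginally more elementary---it bypasses the spectral observation---and makes the dependence on the hypothesis $a,b,c\geq 0$ more explicit; the paper's argument, on the other hand, yields the stronger intermediate fact $\fc_0\subset\pos$, which is of independent interest later in the paper. You are also more careful than the paper in spelling out closedness and the $SO(3)$-invariance of $\fc_0$, both of which the paper dismisses as holding ``by construction''.
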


\medskip
\noindent
\begin{proof} It suffices to show
that the edge of $\fw_0$ is given by $\mathfrak{so}(3)$.
Then the invariance of $\fw_0$ under the $\Adr$-action
of $E(\fw_0)$ is obviously guaranteed by construction.
Clearly, one has the inclusion $\mathfrak{so}(3) \subset E(\fw_0)$.
Conversely, let $W \in E(\fw_0)$.
Then, $W=A+B$ with $A\in\mathfrak{so}(3)$ and $B\in\fc_0$.
Since $-W\in E(\fw_0)$, there exits $A'\in\mathfrak{so}(3)$
and $B'\in\fc_0$ such that $-W= A'+B'$. Hence
$A+B=-(A'+B')$ and thus $A+A'=-(B+B')$. Since $\fc_0\in\mathfrak{sym}(3)$, it
is trivial that $\mathfrak{so}(3)\cap\fc_0=\{0\}$. Therefore,
$A+A' = -(B+B')= 0$ and hence $A'=-A$ and $B'=-B$.
Now, $B,-B\in\fc_0=\mathbb R^+_0\,\mathcal M(\Gamma_0)$.
But $\mathcal{M}(\Gamma_0)$ is contained in the set of all positive
semidefinite matrices and therefore we conclude $B=0$. Thus we
obtain $E(\fw_0)=\mathfrak{so}(3)$.
\end{proof}


\medskip
\begin{proposition}\label{example1globality}
The set $\fw_0 = \so(3)\oplus (-\fc_0)$ is
the {\em global} Lie wedge to control system \eqref{eq:ex1}.
\end{proposition}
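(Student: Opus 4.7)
My plan is to invoke the Principal Theorem of Globality (Appendix~A), specialised to $\fw_0$ exactly as in the sketch of Theorem~\ref{thm:globality-3}. Lemma~\ref{ex1edgeprop} has already taken care of the first two ingredients: $\fw_0$ is a Lie wedge, and its edge $\so(3)$ is the Lie algebra of the closed subgroup $SO(3)\subset GL(3,\R{})$. The substantive remaining task is to produce a function $\varphi\colon GL(3,\R{})\to\R{}$ whose differential is nonnegative along $AX$ for every $A\in\fw_0$ and every $X\in GL(3,\R{})$, and strictly positive at $X=\unity$ for every $A\in\fw_0\setminus\so(3)$.

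First I would pick the natural $\R{}^3$-analogue of the Frobenius function suggested in the sketched proof of Theorem~\ref{thm:globality-3}, namely $\varphi(X):=-\tr(X^\top X)$, so that $d\varphi(X)(Y)=-2\tr(X^\top Y)$. Next I would lean on the Cartan decomposition $\gl(3,\R{})=\so(3)\oplus\mathfrak{sym}(3)$: every $A\in\fw_0$ splits as $A=K-S$ with $K$ skew-symmetric and $S\in\fc_0$. The decisive observation to flag is that $\fc_0$ is contained in the positive semidefinite cone, being the $\R{}_0^+$-convex hull of orthogonal conjugates of the PSD diagonal matrix $\Gamma_0$. Then the two verifications collapse to routine uses of $\tr(X^\top K X)=0$ for skew $K$, positivity of $\tr(SXX^\top)$ for PSD $S$, and strict positivity of $\tr(S)$ whenever $S\neq 0$.

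With both conditions of the Principal Theorem in hand, I conclude that $\fw_0$ is global in $GL(3,\R{})$. Theorem~\ref{thm:L-saturate} then upgrades this to the desired identity $\fw_0=L(\overline{\bP}_\Sigma)$.

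The main obstacle is not algebraic or computational but conceptual: one must recognise that the PSD character of $\fc_0$ (inherited from $\Gamma_0\geq 0$) is exactly what makes a Frobenius-type $\varphi$ work, and trust that the skew/symmetric split of the Cartan decomposition forces the skew part to drop out identically while the symmetric part picks up the correct sign. Once these two points are in place, the verification amounts to a short, one-line calculation per condition.
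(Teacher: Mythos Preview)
Your argument is correct. In substance it is the proof of Theorem~\ref{thm:globality-3} specialised to $\gl(3,\R{})$: you invoke the Principal Theorem of Globality directly with the Frobenius function $\varphi(X)=-\tr(X^\top X)$, and the $\fw$-positivity checks go through precisely because $\fc_0$ sits inside the positive semidefinite cone (as already noted in the proof of Lemma~\ref{ex1edgeprop}). The paper's own proof does not unwind this; it is a one-line invocation of Theorem~\ref{thm:globality-3}, observing that \eqref{eq:ex1} is the coherence-vector representation of a unital GKS--Lindblad system, and the subsequent Remark offers Corollary~\ref{HcontrolGlobalWedge} as an alternative one-liner. So the two arguments agree in content but differ in packaging: yours is explicit and self-contained in the $3\times 3$ setting, while the paper defers to the general result---whose sketched proof uses exactly the same $\varphi$ you wrote down.
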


\medskip
\begin{proof}
This is an immediate consequence of Theorem~\ref{thm:globality-3}
and the fact that \eqref{eq:ex1} comes from a unital GKS"~Lindblad
master equation in the coherence-vector representation.
\end{proof}

\medskip
\begin{remark}
Alternatively to the above proof, one could apply Corollary
\ref{HcontrolGlobalWedge}, because $\fw_0$ is a matrix representation
of the global Lie wedge described therein. 
\end{remark}

\medskip
For general $\Gamma_0=\diag(a,b,c)$, the Lie wedge $\fw_0$ in
Example~1 does {\em not} carry the special structure of a Lie
{\em semialgebra}, cf.~Sec.~\ref{sec:LieSemiBasics}. This can be shown
by choosing a suitable $A' \in \fw_0$ which violates the 
inclusion $[A', T_{A'}\fw_0] \subset T_{A'}\fw_0$.
--- In contrast, for the case\footnote{Note that this case
exactly corresponds to the (isotropic) depolarising quantum channel
discussed in Sec.~\ref{sec:q-channels}.}
$\Gamma_0=\lambda\cdot\unity$,
indeed we obtain a
Lie semialgebra, because the BCH-product $A \star B$ obviously stays
inside $\fw_0$, whenever $A,B \in \fw_0$. Further details and
proofs are given in Appendix~B.

\subsection[{\em Example~2:} System Satisfying (WH)-Condition with Invariant Relaxation Operator]
{{\bf Example~2}: Corresponds to a Qubit System with Condition (WH) Satisfied 
and Control Invariant Relaxation Operator}

Consider the control system in $GL(3,\R{})$ given by
\begin{eqnarray}
\label{eq:ex2}
  \dot X=-(A+B_u)\,X\;,
\end{eqnarray}
where the control term $B_u$ is of the form $B_u:=u H_y$
with $u\in \R{}$ and the drift term
$A:=\Gamma_0+H_z$ is composed of a \/`Hamiltonian\/' part, $H_z$, and
a relaxation part
\begin{eqnarray}\label{eq:drift2}
\Gamma_0:=\gamma\,\diag(1,0,1)
\end{eqnarray}
with $\gamma\geq0$. Since 
$\langle H_y,H_z\rangle_{\sf Lie}=\mathfrak{so}(3)$
the system \eqref{eq:ex2} fulfills
condition (WH) in the sense of Sec.~\ref{subsec:notions}
but obviously not condition (H).

\medskip
Now in step (1) of the inner approximation procedure to the global
Lie wedge of \eqref{eq:ex2} one finds
\begin{equation}
\fw_1 = \mathbb R H_y \oplus  (-\fc_1)
\end{equation}
with $\fc_1 := \mathbb R_0^+ (H_z + \Gamma_0)$ whose edge is given by
\begin{equation}
  E(\fw_1)=\mathbb{R}H_y \;. 
\end{equation}

In step (3) we include elements obtained by
conjugations generated by edge elements identified
in step (2), i.e.\ elements of the form
\begin{equation}\label{dissipativeExample2}
{\rm e}^{\theta H_y} \,\fc_1\, {\rm e}^{-\theta H_y}
=\lambda\;(\cos(\theta)H_z+\sin(\theta)H_x+\Gamma_0)\qquad
\end{equation}
for $\theta\in\R{}$ and $\lambda\geq 0$.
By orthogonality \mbox{$\langle\Gamma_0|H_\nu\rangle=0$}
for $\nu=x,y,z$ one readily gets a Hilbert space
$\mathcal{H}:=\mbox{span}\left\{H_x,H_z,\Gamma_0\right\}$,
in which the edge-invariant cone elements of \eqref{dissipativeExample2}
can be expanded using the following short-hand 
\begin{eqnarray}\label{eqn:short-hand-scalprod}
\lambda\big(\sin(\theta)H_x+\cos(\theta)H_z+\Gamma_0\big) =:
\lambda\left[\begin{smallmatrix}\sin(\theta)\\ \cos(\theta)\\1\end{smallmatrix}\right]
\cdot\left[\begin{smallmatrix}H_x\\H_z\\\Gamma_0\end{smallmatrix}\right]\;.
\end{eqnarray}
Then its convex hull gives the final cone
\begin{eqnarray}\label{eqn:Ex2-cone}
  \fc_0:= \R{}^+_0\conv \left\{
\left[\begin{smallmatrix}\sin(\theta)\\ \cos(\theta)\\1\end{smallmatrix}\right]\cdot
\left[\begin{smallmatrix}H_x\\H_z\\\Gamma_0\end{smallmatrix}\right]
\;|\; \theta \in \R{}\right\}
\end{eqnarray}
--- a classical $3$-dimensional \/`ice cone\/', cf.~Fig.~\ref{fig:Ex2}(a).
By construction, $\fc_0$ remains $\Adr_{\exp E(\fw)}$-invariant.
Finally, since $H_y\perp\fc_0$,
the Lie wedge itself admits the orthogonal decomposition
\begin{eqnarray}\label{eqn:Ex2-cone-a}
  \fw_0:=\mathbb{R}H_y\oplus (-\fc_0)\;.
\end{eqnarray}

\begin{figure}
\begin{center}
\raisebox{35mm}{{\sf (a)}}
\includegraphics[scale=.9]{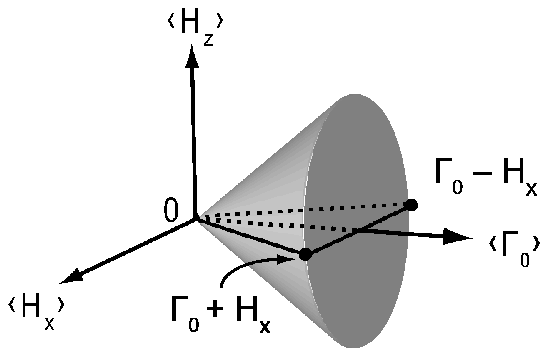}\\[2mm]
\raisebox{35mm}{{\sf (b)}}
\includegraphics[scale=.9]{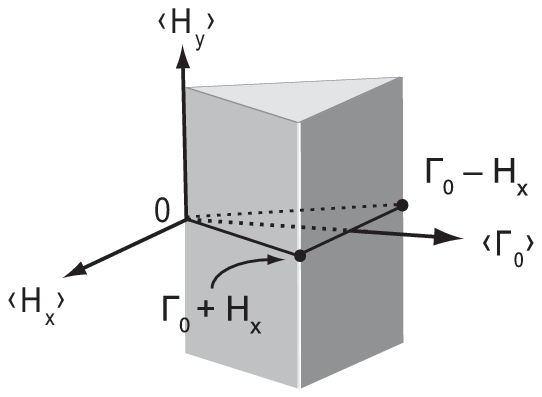}
\end{center}
\caption{\label{fig:Ex2} The Lie wedge for the system of {\bf Example~2}
satisfying the (WH)-condition is four-dimensional: it contains (a) the convex cone $\fc_0$ (see
\eqref{eqn:Ex2-cone} with $\lambda=1$)
generated by $e^{\theta H_y}(\Gamma_0+H_z)e^{-\theta H_y}$ and shown in the projection into
the subspace spanned by $\{H_x,H_z,\Gamma_0\}$; and it comprises (b) the
prism-shaped wedge projected into the subspace spanned by $\{H_y,H_z,\Gamma_0\}$.
Note that the {\em edge of the wedge} is spanned by the
control Hamiltonian $H_y$. Both parts of the figure scale with $\lambda\to\infty$.
}
\end{figure}

\medskip
\begin{proposition}\label{example2weakglobality}
The set $\fw_0 = \mathbb{R}H_y\oplus(-\fc_0)$ is
the {\em global} Lie wedge to control system \eqref{eq:ex2}.
\end{proposition}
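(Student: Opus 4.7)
The plan is to apply Theorem~\ref{thm:globality-3} directly to the system \eqref{eq:ex2}. For this, we identify the control algebra as $\fk_c = \R H_y$ with associated Lie subgroup $\bK = \exp(\R H_y) \cong SO(2) \subset SO(3)$; compactness of $SO(2)$ immediately ensures that $\bK$ is closed. Observing from Table~\ref{tab:comm-tab} that $[H_y, E_{11}] = -p_y$ and $[H_y, E_{33}] = p_y$, we see $[H_y, \Gamma_0] = 0$, so conjugation of the combined drift $H_z + \Gamma_0$ by $e^{\theta H_y}$ rotates only the Hamiltonian component and exactly reproduces the generators listed in \eqref{dissipativeExample2} and \eqref{eqn:Ex2-cone}. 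Therefore the cone $\fc_0$ of Example~2 coincides with the cone $\fc$ prescribed by Theorem~\ref{thm:globality-3} for the choices $H_d = H_z$, $\Gamma_L = \Gamma_0$, $\bK = SO(2)$.

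What remains is to verify the accessibility condition (A), namely $\fs = \langle H_y,\, H_z + \Gamma_0\rangle_{\sf Lie} = \gl(\mathfrak{her}_0(2)) \cong \gl(3,\R)$. The decisive first step uses the commutativity $[H_y,\Gamma_0]=0$ together with $[H_y,H_z]=-H_x$ from Table~\ref{tab:comm-tab} to obtain $[H_y, H_z+\Gamma_0] = -H_x \in \fs$; this single bracket decouples the Hamiltonian and dissipative parts of the drift. Once $H_x,H_y \in \fs$, bracketing gives $H_z \in \fs$ and hence $\Gamma_0 \in \fs$ as well, so $\so(3) \subset \fs$. Commuting the $H_\nu$ with $\Gamma_0$ yields $p_x$ (and by $SO(2)$-symmetry $p_y, p_z$) from the relevant table entries, and commuting the $H_\nu$ with the $p_\nu$ delivers the diagonal differences $\Delta_{ij} = E_{ii} - E_{jj}$. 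Combined with $\Gamma_0 = \gamma(E_{11}+E_{33})$ this produces the individual projectors $E_{11}, E_{22}, E_{33}$. A dimension count then confirms that $\fs$ contains a spanning set of all nine basis elements of $\gl(3,\R)$.

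Both hypotheses of Theorem~\ref{thm:globality-3} now being in place, the conclusion $L(\overline{\bP}_\Sigma) = \fk_c \oplus (-\fc_0) = \R H_y \oplus (-\fc_0) = \fw_0$ follows. The only nontrivial obstacle is the accessibility verification, and even there the work is reduced by the symmetry identity $[H_y,\Gamma_0]=0$, which trivialises the separation of drift components; without this, one would have to chase longer iterated brackets. Note that Corollary~\ref{HcontrolGlobalWedge} is not available here, since the system satisfies only (WH) and not (H), so Theorem~\ref{thm:globality-3} is essential.
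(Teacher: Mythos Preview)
Your approach is the same as the paper's---both simply invoke Theorem~\ref{thm:globality-3}, and the paper's own proof is essentially the one-liner that the Lie-wedge property follows as in Example~1 while globality follows from that theorem. You actually go further than the paper by explicitly verifying the accessibility condition~(A), which the paper leaves implicit; two minor slips in that verification are worth fixing: $[H_y,H_z]=+H_x$ (not $-H_x$), and your ``by $SO(2)$-symmetry'' route to $p_y$ fails because $[H_y,\Gamma_0]=0$---instead obtain $p_y$ from $[H_x,p_z]=p_y$ once $p_x,p_z\in\fs$.
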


\medskip
\begin{proof}
The Lie wedge property of $\fw_0$ can be derived as in Example 1.
Then the globality of $\fw_0$ follows again from Theorem
\ref{thm:globality-3}.
\end{proof}

\begin{remark}
For clarity, let us denote the Lie wedge of Example~1
for $\Gamma_0$ as in \eqref{eq:drift2} by $\fw'_0$
while $\fw_0$ still refers to the Lie wedge of Example~2.
Clearly, one has $\fw_0\subset\fw'_0$ and
$E(\fw_0) = E(\fw'_0)\cap \fw_0$.
Hence globality of $\fw_0$ also follows 
by Corollary \ref{HHLglobalcor} and
the globality of $\fw'_0$. 
\end{remark}

\medskip
Note that the Lie wedge $\fw_0$ in Example~2 does {\em not} specialise
to the form of a Lie {\em semialgebra} as can readily be verified
by a counter example:
According to \eqref{eqn:Ex2-cone},
choose $B\in\fw_0$ as $B=\Gamma_0+H_x$ 
(recalling $\Gamma_0:=E_{11}+E_{33}$ and $A=\Gamma_0+H_z$).
Then by the commutator relations of Tab.~\ref{tab:comm-tab},
the BCH product 
\begin{equation}
A\star B = 2\Gamma_0 + H_x + H_z + \tfrac{1}{2}(H_y + p_x + p_z)\, +\, \dots
\end{equation}
immediately leads outside the Lie wedge $\fw_0$, e.g.,
by the non-vanishing component $p_x + p_z$.
(NB: This argument can be made rigorous by introducing a scaling factor $t$
to give $tA\star tB$).

Finally, the edge of $\fw_0$ in Example~2
is $E(\fw_0)=\mathbb{R}H_y$ (for $\gamma>0$, see Fig.~\ref{fig:Ex2})
while in the limit of a closed system, i.e.\ for $\gamma=0$, it turns into 
the entire Lie algebra $\mathfrak{so}(3)$.

\subsection[{\em Example~3:} System Satisfying (WH)-Condition with General Diagonal Relaxation Operator]
{{\bf Example~3}: Corresponds to a Qubit System with Condition
(WH) Satisfied and General Diagonal Relaxation Operator}

Consider the contol system in $GL(3,\R{})$ given by
\begin{eqnarray}\label{eq:ex3}
  \dot X=-(A+B_u)X\;,
\end{eqnarray}
where $A:=\Gamma_0+H_z$, $B:=uH_y$, and
\begin{equation}
\Gamma_0 := \gamma\,\diag(1,1,2)
\end{equation}
with $u\in\mathbb R$ and $\gamma\geq 0$.
So for approximating the corresponding Lie wedge, we take the first
step to be
\begin{equation}
\fw_1 := \mathbb R H_y \oplus (-\fc_1)\;,
\end{equation}
with $\fc_1:=\mathbb R_0^+ (H_z + \Gamma_0)$ and edge given by the
span of $H_y$ --- the control \/`Hamiltonian\/'.
Again, in step (2) we identify the conjugation to be brought about by
$\Adr_{\exp E(\mathfrak{w})}$ acting on the drift terms such as
to give in step (3) the set
\begin{eqnarray}
\fc_3 = \mathbb R_0^+ \{R_y(\theta)(\Gamma_0+H_z)R_y(\theta)^\top\;|\; \theta\in \mathbb R\}
\end{eqnarray}
with the $\fk$"~component brought about by the conjugated drift
\begin{equation}
R_y(\theta) H_z R_y(\theta)^\top = \cos(\theta)H_z+\sin(\theta)H_x
\end{equation}
and the $\fp$-component reading
\begin{equation}\label{eqn:p-Ex3}
\begin{split}
R_y &(\theta) \Gamma_0 R_y(\theta)^\top  \;=\;
\gamma \left[\begin{smallmatrix} 1+\sin^2(\theta) && 0 && \sin(\theta)\cos(\theta)\\
               0 && 1 && 0\\
               \sin(\theta)\cos(\theta) && 0 &&1+\cos^2(\theta)\end{smallmatrix}\right] \\[2mm]
& = \Gamma_0 + \tfrac{\gamma}{2} \Delta_{13}
 -\tfrac{\gamma}{2}\big(\cos(2\theta)\Delta_{13} -\sin(2\theta) p_y\big) \\[2mm]
  &=\tfrac{11+\cos(2\theta)}{12}\Gamma_0 + \tfrac{\gamma}{2}\big((1-\cos(2\theta))\Delta+\sin(2\theta)p_y\big)\;,\\
\end{split}
\end{equation}
where the matrices $\Delta_{ij}$ and $p_y$ are defined as in
Tab.~\ref{tab:comm-tab}, and, for the sake of orthogonality,
$\Delta:=\tfrac{2}{9}\unity+\tfrac{1}{18}\Delta_{12}+\tfrac{8}{9}\Delta_{13}$.

\medskip
Therefore, the Lie wedge can be expanded within the five-dimensional
Hilbert space
$\mathcal{H}:=  \mbox{span}\{H_x, H_z, p_y, \Delta, \Gamma_0\}$
and the final inner approximation to the Lie wedge takes the form
\begin{equation}\label{eqn:Ex3-cone-a}
\fw_0:=\mathbb R H_y \oplus - \fc_0 \;,
\end{equation}
where $\fc_0$ is parameterised (again in the short-hand of
\eqref{eqn:short-hand-scalprod}) as
\begin{equation}\label{eqn:Ex3-cone}
\fc_0:=\R{}_0^+\conv\Big\{
 \left[\begin{smallmatrix}
 2\sin(\theta)\\2\cos(\theta)\\ \gamma\sin(2\theta) \\ \gamma(1-\cos(2\theta))\\[.5mm] (11+\cos(2\theta))/6
\end{smallmatrix}\right]
\cdot
\left[\begin{smallmatrix}
H_x\\H_z\\p_y\\ \Delta\\[.5mm] \Gamma_0
\end{smallmatrix}\right] \;\Big|\; \theta\in \R{} \Big\}.
\end{equation}
It is shown in Fig.~\ref{fig:Ex3}. --- As in Example~2,
letting $\Adr_{\exp E(\mathfrak{w})}$ act on the drift terms
adds no further elements to the edge of the wedge, so one gets:

\medskip
\begin{proposition}\label{example3weakglobality}
The set $\fw_0 = \mathbb{R}H_y\oplus(-\fc_0)$ is
the {\em global} Lie wedge to control system \eqref{eq:ex3}.
\end{proposition}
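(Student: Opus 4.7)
The plan is to follow the two-step strategy used in Propositions~\ref{example1globality} and \ref{example2weakglobality}: first verify that $\fw_0$ is a Lie wedge with edge exactly $\mathbb{R}H_y$, and then invoke Theorem~\ref{thm:globality-3} with $\fk_c := \mathbb{R}H_y$ and $\fc := \fc_0$. Closedness and convexity of $\fc_0$ are built into the parameterization \eqref{eqn:Ex3-cone}, while its invariance under the $\Adr$-action of $\exp(\mathbb{R}H_y)$ is automatic by construction, since conjugating $(H_z+\Gamma_0)$ by $R_y(\phi)$ merely shifts the angle parameter $\theta$ in the generating set.

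For the edge property $E(\fw_0) = \mathbb{R}H_y$, I would mirror Lemma~\ref{ex1edgeprop}. Suppose $W, -W \in \fw_0$ and decompose $W = \alpha H_y + B$ and $-W = \alpha' H_y + B'$ with $B,B' \in -\fc_0$. Since every generator in \eqref{eqn:Ex3-cone} lives in the ambient Hilbert space $\mathcal{H} = \mathrm{span}\{H_x,H_z,p_y,\Delta,\Gamma_0\}$ and is orthogonal to $H_y$, matching the $H_y$-component gives $\alpha' = -\alpha$ and hence $B' = -B$, so both $B$ and $-B$ lie in $\fc_0$. The crucial observation is that every extremal ray in \eqref{eqn:Ex3-cone} has strictly positive $\Gamma_0$-coefficient $\lambda(11+\cos(2\theta))/6$ for $\lambda>0$; consequently the $\Gamma_0$-component of any non-zero element of $\fc_0$ is strictly positive, forcing $B = 0$ and hence $W \in \mathbb{R}H_y$.

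With $\fw_0$ established as a Lie wedge, Theorem~\ref{thm:globality-3} applies provided accessibility condition (A) holds and the Lie subgroup $\bK := \exp(\mathbb{R}H_y)$ is closed within $SO(3)$. The latter is obvious, as $\bK$ is the one-parameter circle group of rotations about the $y$-axis, and (A) follows by the same Lindblad-theoretic argument used in Examples~1 and 2. The identification of $\fc_0$ with $\R{}_0^+\conv\{U(H_z+\Gamma_0)U^\top \mid U \in \bK\}$ is exactly the content of \eqref{eqn:p-Ex3} combined with $R_y(\theta) H_z R_y(\theta)^\top = \cos(\theta) H_z + \sin(\theta) H_x$. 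The main obstacle is the implicit claim that no further iterations of steps (2)--(4) of the inner-approximation procedure enlarge the wedge, which is precisely what Theorem~\ref{thm:globality-3} is designed to rule out. As an alternative route, one may observe that $\fw_0 \subset \fw'_0 := \so(3) \oplus (-\fc'_0)$, the global Lie wedge of the fully $H$-controllable Example~1 system with the same $\Gamma_0$, and that $E(\fw_0) = E(\fw'_0) \cap \fw_0$; Corollary~\ref{HHLglobalcor} then yields globality directly, mirroring the remark following Proposition~\ref{example2weakglobality}.
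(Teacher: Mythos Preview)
Your proposal is correct and follows essentially the same approach as the paper: the paper's proof is just two lines, stating that the Lie-wedge property of $\fw_0$ ``can be derived as in Example~1'' and that globality then follows from Theorem~\ref{thm:globality-3}. Your write-up fleshes out precisely these two steps, and your alternative route via Corollary~\ref{HHLglobalcor} is the exact analogue of the remark the paper inserts after Proposition~\ref{example2weakglobality}. One minor point worth noting: your pointedness argument via the strictly positive $\Gamma_0$-coefficient is in fact better adapted to this example than a literal transcription of Lemma~\ref{ex1edgeprop}, since here $\fc_0$ is no longer contained in $\sym(3)$ and the positive-semidefiniteness argument used there does not apply verbatim.
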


\medskip
\begin{proof}
The Lie wedge property of $\fw_0$ can be derived as in Example 1.
Then the globality of $\fw_0$ follows again from Theorem
\ref{thm:globality-3}.
\end{proof}

\begin{figure}
\begin{center}
\includegraphics[scale=0.39]{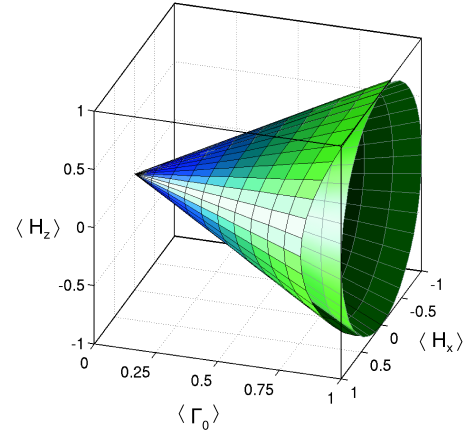}
\end{center}
\caption{\label{fig:Ex3} (Colour online) The Lie wedge of the more general
system in {\bf Example~3} is five-dimensional. Here the surface
of the convex cone $\fc_0$ generated by $e^{\theta H_y}(\Gamma_0+H_z)e^{-\theta H_y}$
(see \eqref{eqn:Ex3-cone} with $\lambda=1$)
is projected into the subspace $\mbox{span}\{H_x,H_z,\Gamma_0\}$.
Since $[\Gamma_0,H_y]\neq 0$ it deviates from the rotational symmetry
of Example~2 given in the inner part for comparison, cp Fig.~\ref{fig:Ex2}(a).
The figure scales with $\lambda\to\infty$.
The prism-shaped projection into the subspace $\mbox{span}\{H_y,H_z,\Gamma_0\}$ (not shown) is similar
to that of Example~2 already given in Fig.~\ref{fig:Ex2}(b).
}
\end{figure}


\medskip
\noindent
Generalising the relaxation operator in Example~3 to
$\Gamma_0 := \gamma\,\diag(a,b,c)$
with $a,b,c,\gamma\geq 0$ results in a generalised $\fp$-component
replacing \eqref{eqn:p-Ex3} by
\begin{equation*}
\begin{split}
&R_y (\theta) \Gamma_0 R_y(\theta)^\top  \;=\;
 \gamma\left[\begin{smallmatrix} a+(c-a)\sin^2(\theta) && 0 && (c-a)\sin(\theta)\cos(\theta)\\
               0 && b && 0\\
               (c-a)\sin(\theta)\cos(\theta) && 0 &&c-(c-a)\sin^2(\theta)\end{smallmatrix}\right] \\[2mm]
& =  \Gamma_0 + \gamma(c-a) \sin^2(\theta) \Delta_{13}
        +\gamma(c-a) \sin(\theta)\cos(\theta) p_y \\[2mm]
& = \Gamma_0 + \tfrac{\gamma(c-a)}{2} \Delta_{13}
 -\tfrac{\gamma(c-a)}{2}\big(\cos(2\theta)\Delta_{13} -\sin(2\theta) p_y\big)
\end{split}
\end{equation*}
This leads to a cone $\fc_0$ for \eqref{eqn:Ex3-cone-a} that keeps the structure of
\eqref{eqn:Ex3-cone} in a slightly more general form,
where Example~2 is readily reproduced by $c=a$, while Example~3
follows for $a=b=1$ and $c=2$. 

\medskip
The Lie wedge in Example~3 and its generalised form treated above
do {\em not} take the form of a Lie {\em semialgebra} either.
Choose $B:= H_y$ from the wedge $\fw_0$ of \eqref{eqn:Ex3-cone-a}
and recall $A:=\Gamma_0 + H_z$.
Then the BCH product 
\begin{equation}
A\star B = \Gamma_0 + H_y + H_z - \tfrac{1}{2}(H_x + p_y)\, +\, \dots
\end{equation}
leads outside the Lie wedge of \eqref{eqn:Ex3-cone-a}
since the component 
$H_x + p_y$ is {\em not} within\footnote{This would require the equality $2\,\sin(\theta)=\sin(2\theta)$
to hold for non-trivial~$\theta$ beyond its actual solutions $\theta=0\negthickspace\mod\pi$.}
the cone \eqref{eqn:Ex3-cone}.

\medskip
As pointed out already, in this section, we have chosen a
representation in $\mathbb R^3$ in order to visualise the
Hamiltonian parts of the respective quantum dynamics by
$SO(3)$-rotations. In quantum mechanics, 
this picture can be recovered in the so-called
{\em coherence-vector representation} \cite{AlickiLendi87}.
Therefore, when taking an explicit spin"~$\tfrac{1}{2}$
representation of $\adr_{\su(2)}$ 
in the following chapter,
the key results obtained here in
Examples~1 through~3
will show up again.
\section{Open Single-Qubit Quantum Systems}\label{sec:q-channels}

In this section, we analyze the standard single-qubit unital quantum systems 
beyond their purely dissipative evolution by allowing for Hamiltonian
drifts and controls.
In view of steering open quantum systems, this is an important generalisation.

\subsection{Markovian Master Equation in Qubit Representation}

Based on the Pauli matrices
\begin{equation}\label{eqn:Paulis}
\sigma_x:=
\left[\begin{matrix} 0 & 1\\ 1 &0 \end{matrix}\right],\quad
\sigma_y:=
\left[\begin{matrix} 0 & -i\\ i &0 \end{matrix}\right],\quad
\sigma_z:=
\left[\begin{matrix} 1 & 0\\ 0 & -1 \end{matrix}\right]
\end{equation}
in this section we deliberately depart from the previous notation by using
the explicit spin-$\tfrac{1}{2}$ adjoint representation carrying the
spin-quantum number $j=\tfrac{1}{2}$ as prefactor in
given by 
\begin{equation}\label{eqn:def-ad-sigma}
\hat{\sigma}_\nu:=
\tfrac{1}{2} (\unity_2\otimes\sigma_\nu - \sigma_\nu^\top\otimes\unity_2)\;,
\end{equation}
for $\nu\in\{x,y,z\}$, where $\otimes$ denotes
the Kronecker product of matrices.
One easily recovers the $\su(2)$ commutation relations 
\begin{equation}
[i\,\hat{\sigma}_p,i\,\hat{\sigma}_q]= - \varepsilon_{pqr} \;i\,\hat{\sigma}_r
\end{equation}
to convince oneself of
$\widehat{\adr}_{\su(2)} := \expt{i\hat{\sigma}_x, i\hat{\sigma}_y,i\hat{\sigma}_z}_{\sf Lie}
\iso\adr_{\su(2)}$.
Here and henceforth we use $\varepsilon_{pqr}$ to discriminate even and odd permutations
of $(x,y,z)$ by their signs, i.e.\
$\varepsilon_{pqr}=+1$ if $(p,q,r)$ is an {\em even} permutation
of $(x,y,z)$, while $\varepsilon_{pqr}=-1$ for an {\em odd} permutation.

Thus for 
a single open qubit system in the above representation
the controlled master equation \eqref{eqn:sigma} or rather
its group lift \eqref{eqn:SIGMA} takes the explicit
form
\begin{equation}
\dot X(t) =
-\Big(i\big(\hat{H}_d + \sum_j u_j\hat{H}_j\big)
+ \hat{\Gamma}_L\Big)\, X(t)\,.
\end{equation}
Here, $X(t)$ may be a density operator regarded (via the so-called
$\vec$-representation\footnote{Note that the $\vec$-representation
as well as the vector of coherence notation just provide different
\/`coordinates\/' for the abstract master equation \eqref{eqn:sigma}.
\cite{HJ2}) as an element in $\C{}^4$} or a qubit quantum channel represented in
$GL(4,\mathbb C)$. Moreover, $\hat{H}_d$ and $\hat{H}_j$ are
in general of the from
$\hat{H}_d := 
(\unity_2\otimes H_d - H_d^\top\otimes\unity_2)$
and similar $\hat{H}_j$. 
To ensure complete positivity, the relaxation term $\hat{\Gamma}_L$
shall be again of Lindblad-Kossakowski form
which for the standard unital single-qubit systems (with $V_k$ Hermitian)
simply reads 
$\hat{\Gamma}_L= 2\sum_k \gamma_k\;\hat{\sigma}_k^2$
to give the nicely structured generator
\begin{equation}\label{eqn:Lindblad-Lu}
\mathcal L_u =
i\big(\hat{H}_d + \sum_j u_j\hat{H}_j\big)
+\; 2 \negthickspace\negthickspace\sum_{k\in \{x,y,z\}}\negthickspace\negthickspace \gamma_k\;\hat{\sigma}_k^2\;.
\end{equation}
The generator is of this form because the $i\hat{H}$ terms are in the $\fk$-part
of the Cartan decomposition of $\gl(4,\mathbb C)$ into
skew-Hermitian ($\fk$) and Hermitian ($\fp$) matrices,
whereas the $\hat{\sigma}_k^2$ terms are in the $\fp$-part.


\subsection[Fully H-Controllable Channels]
{Single-Qubit Systems Satisfying Condition (H)}

Here we consider the class of fully Hamiltonian controllable
unital single-qubit systems whose dissipation is governed by
a \emph{single} Lindblad operator $\hat{\sigma}_k^2$ 
for some $k\in\{x,y,z\}$
i.e.~two of the three prefactors $\gamma_x,\gamma_y,\gamma_z$
have to vanish.

Similar to Example~1 of Sec.~\ref{sec:geoR3}, choose the controls
$\hat{\sigma}_x$ and $\hat{\sigma}_y$ to
see that such a system fulfills condition (H), since
$\expt{i\hat{\sigma}_x,i\hat{\sigma}_y}_{\sf Lie}=\widehat{\adr}_{\su(2)}$.
Then it is actually immaterial which single Pauli matrix is chosen
as the Lindblad operator $\hat{\sigma}^2_k$, because all of the
Pauli matrices are unitarily equivalent.
So without loss of generality, one may choose $k=z$,
i.e. $\gamma_x=0$, $\gamma_y=0$, and $\gamma_z =: \gamma$.

Therefore the fully Hamiltonian controllable version of the
bit-flip, phase-flip, and bit-phase-flip channels are dynamically
equivalent in as much as they have (up to unitary
equivalence) a common global Lie wedge
\begin{eqnarray}
  \fw_0 :=\widehat{\adr}_{\su(2)}\oplus-\fc_0\;,
\end{eqnarray}
where the cone $\fc_0$ is defined by
\begin{equation}\label{Eqn;fullHcontrolcone1qubit}
\fc_0:=
\R{}_0^+\mbox{conv}\left\{\hat{U}\,\hat{\sigma}_z^2\,\hat{U}^\dagger
\;\big|\;U\in SU(2) \right\} \\[2mm] \quad
\end{equation}
with 
\begin{equation}\label{eqn:U-superop}
\hat{U} := \bar U \otimes U\;. 
\end{equation}
Clearly, the
wedge $\fw_0$ is \emph{global} by Corollary~\ref{HcontrolGlobalWedge}
or, alternatively, by Theorem \ref{thm:globality-3} and its edge $E(\fw_0)$ is
given by the Lie subalgebra $\widehat{\adr}_{\su(2)}$.
The above Lie wedge is isomorphic to the one in
Example~1 of Sec.~\ref{sec:geoR3} for the particular
choice that $\Gamma_0 = \diag(1,1,0)$.

\subsection[Controllable Channels Satisfying Condition (WH)~I: One Lindblad Operator]
{Single-Qubit Systems Satisfying Condition (WH): One Lindblad Operator}

Here we discuss an important class of standard single-qubit systems
which are particularly simple in three regards
\begin{enumerate}
\item[(i)] their dissipative term is governed by a single
Lindblad operator,
$\Gamma_0 :=
2\gamma\hat{\sigma}^2_k$
for some $k\in\{x,y,z\}$;
\item[(ii)] their switchable Hamiltonian control is brought
about by a single Hamiltonian $\hat{\sigma}_c$
for some $c\in\{x,y,z\}$;
\item[(iii)] their non-switchable Hamiltonian drift is
$\hat{\sigma}_d$ for some $d\in\{x,y,z\}$.
\end{enumerate}

Applying the algorithm for the inner approximation of the Lie wedge,
we get in step (1)
\begin{equation}
\fw^c_{dk}(1) :=
i\; \mathbb R \hat{\sigma}_c \oplus\,
-\mathbb R_0^+ \big(i\hat{\sigma}_d+2\gamma\hat{\sigma}_k^2\big)\;,
\end{equation}
where again we note the separation by $\fk$-$\fp$ components.
In step (2) we identify the span generated by the control $i\hat{\sigma}_c$
as the edge $E(\fw)$ of the wedge.
So the conjugation has to be by the control subgroup, i.e.\ by
$
e^{-i2\theta\hat{\sigma}_c} = 
e^{+i\theta\sigma_c^\top}\otimes e^{-i\theta\sigma_c}$.
Thus in step (3)
one obtains as $\fk$"~component of the conjugated drift
\begin{equation}\label{Eqn:Kcomponenttrig}
\begin{split}
K_d^c(\theta) := &\quad e^{-i\theta\hat\sigma_c}(i\hat\sigma_d)e^{i\theta\hat\sigma_c}\\
   = &\begin{cases} i\,\hat\sigma_d &\text{for $c=d$}\quad\\
                   i\,\cos(\theta)\hat\sigma_d+ i\,\varepsilon_{cdq}\sin(\theta)\hat\sigma_q &\text{else}
     \end{cases}
\end{split}
\end{equation}
and as $\fp$-component
\begin{equation}\label{Eqn:Pcomponenttrig}
\begin{split}
P_k^c(\theta) := &\quad e^{-i\theta\hat\sigma_c}(2\gamma\,\hat\sigma_k^2)e^{i\theta\hat\sigma_c}\\[1mm]
   = &\begin{cases} 2\gamma\,\hat\sigma_k^2 &\text{for $c=k$}\quad\\[1mm]
                   2\gamma\,\big(\cos(\theta)\hat\sigma_k+ \,\varepsilon_{ckr}\sin(\theta)\hat\sigma_r\big)^2 &\text{else}\;.
     \end{cases}
\end{split}
\end{equation}
The last expression (for $c\neq k$) can be further resolved 
using 
the anticommutator $\left\{A,B\right\}_+:=AB+BA$
\begin{equation}\label{eqn:qubitchancor2}
\begin{split}
P^c_k(\theta)  
&= 2\gamma\left[\begin{smallmatrix}\cos^2(\theta)\\
                                        \sin^2(\theta)\\
                                        \cos(\theta)\sin(\theta)\end{smallmatrix}\right]\cdot
               \left[\begin{smallmatrix}\hat\sigma_k^2\\ \hat\sigma_r^2\\ \varepsilon_{ckr}\{\hat\sigma_k,\hat\sigma_r\}_+ \end{smallmatrix}\right]\,.\\[2mm]
&= \tfrac{\gamma}{2}\left[\begin{smallmatrix}2\\
                                        1+\cos(2\theta)\\
                                        1-\cos(2\theta)\\
                                        \sin(2\theta)\end{smallmatrix}\right]\cdot
               \left[\begin{smallmatrix}\unity\\-(\sigma^\top_k\otimes\sigma_k)\\
                                                -(\sigma^\top_r\otimes\sigma_r)\\
                                                -\varepsilon_{ckr} ((\sigma^\top_k\otimes\sigma_r)+(\sigma^\top_r\otimes\sigma_k))\end{smallmatrix}\right]\;,\\
\end{split}
\end{equation}
where the latter identity gives a decomposition into mutually orthogonal Pauli-basis elements.

To summarize, if the control Hamiltonian neither commutes with the Hamiltonian
part nor with the dissipative part of the drift, one obtains
in terms of the above $K_d^c(\theta)$ and $P_k^c(\theta)$
\begin{eqnarray}\label{eqn:QubitCone}
\fc_{dk}^c:= \mathbb{R}_0^+ \conv\{K_d^c(\theta)+P_k^c(\theta)\,|\,\theta\in\R{}\,\}
\end{eqnarray}
However, if $[\hat\sigma_c,\Gamma_0]=0$, then the convex cone in equation \eqref{eqn:QubitCone}
simplifies by $P_k^c(\theta)=\Gamma_0$ to
\begin{equation}\label{eqn:CommQubitCone}
\fc_{dk}^c 
=\R{}_0^+\conv\Big\{\left[\begin{smallmatrix}\cos(\theta)\\\sin(\theta)\\1\end{smallmatrix}\right]\cdot
 \left[\begin{smallmatrix}i\hat\sigma_d \\ i\varepsilon_{cdq}\hat\sigma_q\\\Gamma_0\end{smallmatrix}\right]
        \;\Big|\; \theta \in \R{} \Big\}\quad
\end{equation}
in entire analogy to Example~2 of Sec.~\ref{sec:geoR3}.

The final Lie wedge admits the orthogonal decomposition
\begin{equation}
\fw^c_{dk} := i \mathbb R \,\hat\sigma_c \oplus - \fc^c_{dk}
\end{equation}
and moreover by Theorem \ref{thm:globality-3} (or alternatively by
Corollary \ref{HHLglobalcor}) it is {\em global}.
For $\gamma>0$, the edge $E(\fw)=\expt{i\,\hat\sigma_c}$ is again
the span generated by the control, yet it flips into the full algebra
$E(\fw)=\widehat{\adr}_{\su(2)}$ in the limit $\gamma = 0$.

The relation to Examples~2 and 3 of Sec.~\ref{sec:geoR3} is obvious:
Let a unital qubit system satisfy the (WH)-condition
and have a dissipative Lindbladian
$\Gamma_0:=2\gamma\hat\sigma_k^2$ induced by a single Lindblad
operator $V_k = \sigma_k$. If $[\hat\sigma_c,\Gamma_0]\neq 0$, 
one arrives at a situation resembling Example~3, whereas if
$[\hat\sigma_c,\Gamma_0] = 0$, one obtains a result analogous to Example~2.

\subsection*{Application: Bit-Flip and Phase-Flip Channels}
Also the relation to standard {\em unital} qubit channels is immediate:
Note that in the bit-flip channel the noise is generated by
$\hat\sigma_x^2$, while it is
$\hat\sigma_y^2$ in the bit-phase-flip channel and $\hat\sigma_z^2$ in
the phase-flip channel, see Tab.~\ref{tab:q-channels}.
In the absence of any {\em coherent} drift or control brought about
by the respective Hamiltonians 
$H_d=\hat \sigma_d$ or $H_j=\hat \sigma_j$, the Kraus representations
are standard. By allowing for drifts and controls, the Kraus rank $K$
of the channel usually increases to $K$=$4$
with exception of a single $\hat \sigma_d$ or
$\hat \sigma_j$ commuting with the single
Lindblad operator {$\hat \sigma_k^2$} keeping $K$=$2$.
Also the time dependences become more involved. Hence explicit results will be given elsewhere.

Under full H"~controllability, the Lie wedges of all the three
channels become equivalent as the Pauli
matrices and thus the corresponding noise generators are unitarily similar.

In contrast, for the case satisfying the (WH)"~condition, assume
a control system with a Hamiltonian drift term governed by $\hat\sigma_z$.
Upon including relaxation, now there are two different scenarios:
if the control Hamiltonian (indexed by $c\in\{x,y,z\}$)
commutes with the noise generator (indexed by $k\in\{x,y,z\}$), one
finds a situation as in Example~2 and \eqref{eqn:CommQubitCone}, otherwise the
scenario is more general as in \eqref{eqn:QubitCone}.

\begin{table*}[Ht!]
\caption{\label{tab:q-channels}
Controlled Single-Qubit Channels and Their Lie Wedges}
\vspace{-2mm}
\begin{center}
\begin{tabular}{l|ll ll}
\hline\hline\\
{\small Channel} & {\small Primary$^*$ Lindblad Operators} & {\small Primary$^*$ Kraus Operators}
        & \multicolumn{2}{c}{\small ------------------------ Lie Wedges ------------------------- \phantom{XXXX}}\\
 & &  & case satisfying (WH)-condition & H-controllable case\\[1mm]
\hline\\[-0.8mm]
Bit Flip &
$V_1=\sqrt{a_{11}}\;\sigma_x$
&
$E_1=\sqrt{r_{11}}\;\sigma_x$
&
$\mathfrak{w}_{dx}^c=\langle
i\hat\sigma_c\rangle\oplus-\mathfrak{c}_{dx}^c$&
$\mathfrak{w}=\widehat{\adr}_{\mathfrak{su}(2)}\oplus-\mathfrak{c}_0$\\[1mm]
&&
$E_0=\sqrt{q_{11}}\;\unity$  & [see Eqns.~(\ref{eqn:QubitCone},\ref{eqn:CommQubitCone})]
                             & [see Eqn.~\eqref{Eqn;fullHcontrolcone1qubit}] \\[3mm]
Phase Flip &
$V_1=\sqrt{a_{22}}\;\sigma_z$ 
&
$E_1=\sqrt{r_{22}}\;\sigma_z$
&
$\mathfrak{w}_{dz}^c=\langle
i\hat\sigma_c\rangle\oplus-\mathfrak{c}_{dz}^c$&
---same as above---\\[1mm]
&&
$E_0=\sqrt{q_{22}}\;\unity$ & [see Eqns.~(\ref{eqn:QubitCone},\ref{eqn:CommQubitCone})]\\[3mm]
Bit-Phase Flip &
$V_1=\sqrt{a_{33}}\;\sigma_y$
&
$E_1=\sqrt{r_{33}}\;\sigma_y$
&
$\mathfrak{w}_{dy}^c=\langle
i\hat\sigma_c\rangle\oplus-\mathfrak{c}_{dy}^c$&
---same as above---\\[1mm]
&&
$E_0=\sqrt{q_{33}}\;\unity$ & [see Eqns.~(\ref{eqn:QubitCone},\ref{eqn:CommQubitCone})]\\[3mm]
\hline\\
Depolarizing & 
$V_1=\sqrt{a_{11}}\;\sigma_x$
&
$E_1=\sqrt{r_1}\;\sigma_x$
&
$\mathfrak{w}_{d,xyz}^c=\langle
i\hat\sigma_c\rangle\oplus-\mathfrak{c}_{d,xyz}^c$&
$\fw=\widehat{\adr}_{\su(2)}\oplus -\fc_{xyz}$ \\[1mm]
&$V_2=\sqrt{a_{22}}\;\sigma_y$
&
$E_2=\sqrt{r_2}\;\sigma_y$
& \multicolumn{1}{l}{[see Eqn.~\eqref{Eqn:Kcomponenttrig} and Eqns.~(\ref{eqn:Pkkk},\ref{eqn:Pkkk-simple})]}
& [see Eqns.~(\ref{eqn:H-cont-depol-1},\ref{eqn:H-cont-depol-2})]\\[1mm]
&$V_3=\sqrt{a_{33}}\;\sigma_z$
&
$E_3=\sqrt{r_3}\;\sigma_z$
&&\\[1mm]
&&
$E_0=\sqrt{r_0}\;\unity$&&\\[2mm]
\multicolumn{5}{l}{\phantom{XX}}\\[-3mm]
\hline\hline\\
\multicolumn{5}{l}{$^*$) Primary operators are for purely dissipative time evolutions
(no Hamiltonian drift no control). 
Then the time dependence of the}\\
\multicolumn{5}{l}{Kraus operators roots in the GKS matrix 
                                        $\{a_{ii}\}_{i=1}^3$. Define:
$ \lambda_1:=a_{22}+a_{33}, \lambda_2:=a_{22}-a_{33}, \lambda_3:=a_{11}+a_{22}$, and thereby
$q_{ii}:=\tfrac{1}{2}(1+e^{-a_{ii}t}), $}\\[1mm]
\multicolumn{5}{l}{$r_{ii}:=\tfrac{1}{2}(1-e^{-a_{ii}t})$ and $r_0:=\tfrac{1}{4}(1+e^{-\lambda_1 t}+e^{-\lambda_2 t}+e^{-\lambda_3 t}),\;
  r_1:=\tfrac{1}{4}(1-e^{-\lambda_1 t}+e^{-\lambda_2 t}-e^{-\lambda_3 t}),\;
  r_2:=\tfrac{1}{4}(1+e^{-\lambda_1 t}-e^{-\lambda_2 t}-e^{-\lambda_3 t})$,}\\[1mm]
\multicolumn{5}{l}{$r_3:=\tfrac{1}{4}(1-e^{-\lambda_1 t}-e^{-\lambda_2 t}+e^{-\lambda_3 t})$. ---
                Under Hamiltonian drift and control the Kraus-rank gets $K=4$ except for one single control $H_c$ or drift $H_d$}\\
\multicolumn{5}{l}{that commutes with the only Lindblad operator $V_k$: in this case the Kraus rank is $K=2$. Time-dependences are involved
                and will be given elsewhere.}\\
\multicolumn{5}{l}{The Pauli matrices $\sigma_\nu$ are defined in Eqn.~\eqref{eqn:Paulis}.
}\\
\end{tabular}
\end{center}
\end{table*}

\subsection[Controllable Channels Satisfying Condition (WH)~II: Several Lindblad Operators]
{Single-Qubit Systems Satisfying Condition (WH): Several Lindblad Operators}

Consider a unital qubit system satisfying the (WH)-condition and whose
Lindbladian $\Gamma_0$ is generated by $\ell = 2$ or $\ell = 3$ different
Lindblad operators $\hat\sigma_k^2$.
Then one obtains the following generalisations of the
symmetric component $P_k^c(\theta)\in\fc_{dk}^c$.\\[3mm]
For $\ell=2$ and  $\sigma_c\perp\sigma_k$, $\sigma_c=\sigma_{k'}$
    \begin{equation}
\begin{split}
\label{eqn:2Lind1comm}
P^c_{kk'}(\theta)
&= 2\left[\begin{smallmatrix}\gamma'\\\gamma\cos^2(\theta)\\
                                        \gamma\sin^2(\theta)\\
                                        \gamma\cos(\theta)\sin(\theta)\end{smallmatrix}\right]\cdot
               \left[\begin{smallmatrix}\hat\sigma^2_{k'}\\ \hat\sigma^2_k\\ \hat\sigma_r^2\\
                                        \varepsilon_{ckr}\{\hat\sigma_k,\hat\sigma_r\}_+ \end{smallmatrix}\right]\\[2mm]
= \tfrac{1}{2}&\left[\begin{smallmatrix} \gamma' \\ 2 (\gamma+\gamma')\\
                                        \gamma(1+\cos(2\theta))\\
                                        \gamma(1-\cos(2\theta))\\
                                        \gamma\sin(2\theta)\end{smallmatrix}\right]\cdot
               \left[\begin{smallmatrix} -(\sigma^\top_{k'}\otimes\sigma_{k'})\\\unity\\
                                        -(\sigma^\top_k\otimes\sigma_k)\\
                                        -(\sigma^\top_r\otimes\sigma_r)\\
                                        -\varepsilon_{ckr} ((\sigma^\top_r\otimes\sigma_k)+(\sigma^\top_k\otimes\sigma_r))\end{smallmatrix}\right]. \\[2mm]
\end{split}
\end{equation}
while for $\ell=3$ and $\sigma_c\perp\sigma_k$, $\sigma_c\perp\sigma_{k'}$, $\sigma_c=\sigma_{k''}$
\begin{equation}\label{eqn:Pkkk}
\begin{split}
&P^c_{kk'k''}(\theta)
= 2\left[\begin{smallmatrix}\gamma''\\
                                        \gamma\cos^2(\theta)+\gamma'\sin^2(\theta)\\
                                        \gamma'\cos^2(\theta)+\gamma\sin^2(\theta)\\
                                        (\gamma-\gamma')\cos(\theta)\sin(\theta)\end{smallmatrix}\right]\cdot
               \left[\begin{smallmatrix}\hat\sigma^2_{k''} \\ \hat\sigma^2_k\\ \hat\sigma^2_{k'}\\ 
                \varepsilon_{ckk'}\{\hat\sigma_k,\hat\sigma_{k'}\}_+ \end{smallmatrix}\right]\\[2mm]
= \tfrac{1}{2}&\left[\begin{smallmatrix}\gamma''\\
                                        2(\gamma+\gamma'+\gamma'')\\
                                        \gamma+\gamma'+(\gamma-\gamma')\cos(2\theta)\\
                                        \gamma+\gamma'-(\gamma-\gamma')\cos(2\theta)\\
                                        (\gamma-\gamma')\sin(2\theta)\end{smallmatrix}\right]\cdot
               \left[\begin{smallmatrix}-(\sigma^\top_{k''}\otimes\sigma_{k''})\\
                                        \unity\\-(\sigma^\top_k\otimes\sigma_k)\\
                                                -(\sigma^\top_{k'}\otimes\sigma_{k'})\\
                                                -\varepsilon_{ckk'} ((\sigma^\top_{k'}\otimes\sigma_k)+(\sigma^\top_k\otimes\sigma_{k'}))\end{smallmatrix}\right].\\[2mm]
\end{split}
\end{equation}
which for $\gamma=\gamma'=\gamma''$ simplifies to
    \begin{eqnarray}\label{eqn:Pkkk-simple}
      P^c_{kk'k''}(\theta)=\Gamma_0=2\gamma(\hat\sigma^2_k+\hat\sigma^2_{k'}+\hat\sigma^2_{k''})\;.
    \end{eqnarray}
Note that \eqref{eqn:2Lind1comm} with $\gamma=\gamma'$ precisely
corresponds to Example~3 in Sec.~\ref{sec:geoR3}.

\subsection*{Application: Depolarising Channel}
Treating the depolarising channel also becomes immediate,
since one has three noise generators governed by all of $\hat\sigma_x,\hat\sigma_y$,
and $\hat\sigma_z$.
Thus the fully Hamiltonian controllable version of the
depolarising channel follows the bit-flip and phase-flip channels
in the structure of its global Lie wedge
\begin{eqnarray}\label{eqn:H-cont-depol-1}
  \fw_0 :=\widehat{\adr}_{\su(2)}\oplus-\fc_{xyz}\;,
\end{eqnarray}
where the cone $\fc_{xyz}$ now reads
\begin{equation}\label{eqn:H-cont-depol-2}
\fc_{xyz}:=\R{}_0^+\mbox{conv}\big\{\hat{U}(\gamma_x\hat\sigma_x^2 + \gamma_y\hat\sigma_y^2 %
        +\gamma_z\hat\sigma_z^2)\hat{U}^\dagger\;|\;U\in SU(2)\big\}
\end{equation}
with $\hat U$ of the form \eqref{eqn:U-superop}.
Again, the edge of the wedge is given by the entire algebra $E(\fw)=\widehat{\adr}_{\su(2)}$
and globality of the wedge follows by Theorem~\ref{thm:globality-3} or
Corollary~\ref{HcontrolGlobalWedge}. ---
Moreover, note that the Lie wedge in the fully Hamiltonian controllable depolarising channel with
{\em isotropic} noise takes the structure of a Lie semialgebra as (in the coherence-vector
representation) it corresponds to the special case of Example~1 in
Sec.~\ref{sec:geoR3}, where the relaxation
operator is a scalar multiple of the unity, $\Gamma_0=\lambda\cdot\unity$.
For {\em anisotropic} relaxation, however, this feature does not arise.

\medskip
If only condition (WH) is satisfied, there are two distinctions:
if the noise contributions are isotropic (i.e.\ with equal contribution by all the
Paulis through $\gamma_x=\gamma_y=\gamma_z$),
one finds a cone expressed by \eqref{Eqn:Kcomponenttrig} and \eqref{eqn:Pkkk-simple}.
However, in the generic anisotropic case, the cone can be expressed by
\eqref{Eqn:Kcomponenttrig} and \eqref{eqn:Pkkk}, see also Tab.~\ref{tab:q-channels}.

\section{Open Two-Qubit Quantum Systems}

In this section we extend the notions introduced in the previous
chapter to three types of two-qubit quantum systems with uncorrelated noise.
The two qubits will be denoted $\sA$ and $\sB$, respectively.
Moreover, we use the short-hands $\sigma_{\mu\nu}:=\sigma_\mu\otimes\sigma_\nu$
with $\mu,\nu\in\{x,y,z,1\}$, where $\sigma_1:=\unity$ as well as
the corresponding \/`commutator superoperators\/'
$\hat\sigma_{\mu\nu}:=\tfrac{1}{2}(\unity\otimes\sigma_{\mu\nu}-\sigma_{\mu\nu}^t\otimes\unity)$.

\subsection[Fully H"~Controllable Channels]{Fully H"~Controllable Two-Qubit Channels}
A fully Hamiltonian controllable two-qubit toy-model
system with switchable Ising-coupling is given by the master equation
\begin{eqnarray}\label{hcontrol2qubiteqn}
     \dot\rho= 
		-\big(i \sum_j u_j\hat\sigma_j + \Gamma_0\big)\rho\;
\end{eqnarray}
where $\hat\sigma_j\in\{\hat\sigma_{x1},\hat\sigma_{y1};\hat\sigma_{1x},\hat\sigma_{1y};\hat\sigma_{zz}\}$ 
are the Hamiltonian control terms with amplitudes $\left\{u_j\right\}_{j=1}^5\in\mathbb{R}$.

Since $\expt{i\hat\sigma_j\,|\,j=1,2,\dots,5}_{\sf Lie} =\widehat{\adr}_{\su(4)}$,
the edge of the wedge is $E(\fw)=\widehat{\adr}_{\su(4)}$.
Following the algorithm for an inner approximation of the
Lie wedge, step (1) thus gives
\begin{eqnarray}
  \fw_1:=\widehat{\adr}_{\mathfrak{su}(4)}\oplus-\mathbb{R}^+_0\Gamma_0\;.
\end{eqnarray}
Conjugating the dissipative component by the exponential map of the edge
and then taking the convex hull yields the convex cone
\begin{eqnarray}
  \fc_0:=\mbox{conv}\big\{\lambda\hat U\Gamma_0 \hat U^{\dagger}| \hat U:=\bar U \otimes U, U\in SU(4), \lambda\geq 0\big\}\;,
\end{eqnarray}
which is the two-qubit analogue of the cone in Eqn.~\eqref{Eqn;fullHcontrolcone1qubit}.
The resulting Lie wedge
\begin{eqnarray}
  \fw_0:=\widehat{\adr}_{\su(4)}\oplus-\fc_0
\end{eqnarray}
is global by Theorem~\ref{thm:globality-3}. 

\subsection[Controllable Channels Satisfying (H)"~Condition Locally and (WH)"~Condition Globally]
	{Two-Qubit Channels Satisfying the (H)"~Condition Locally and the (WH)"~Condition Globally}
By shifting the Ising coupling term from the set of switchable control Hamiltonians
into the (non-switchable) drift term, $\hat\sigma_d= \hat\sigma_{zz}$,
one obtains the realistic and actually widely occuring type of system
\begin{eqnarray}\label{eqn:HWHcontrol2qubit}
     \dot\rho= 
		-\big(i \hat\sigma_d + i\sum_j u_j\hat\sigma_j + \Gamma_0\big)\rho\;
\end{eqnarray}
where now one just has the local control terms 
$\hat\sigma_j\in\{\hat\sigma_{x1},\hat\sigma_{y1};\hat\sigma_{1x}, \hat\sigma_{1y}\}$.
Since $\expt{i\hat\sigma_{x1}, i\hat\sigma_{y1}}_{\sf Lie} =\widehat{\adr}_{\su_{\sA}(2)}\otimes\unity_{\sB}$,
whereas on the other hand $\expt{i\hat\sigma_{1x},i\hat\sigma_{1y}}_{\sf Lie} =\unity_{\sA}\otimes\widehat{\adr}_{\su_{\sB}(2)}$,
the edge of the wedge
\begin{equation}
E(\fw)=\widehat{\adr}_{\su_{\sA}(2)\widehat{\oplus}\su_{\sB}(2)}
\end{equation}
is in fact brought about by the Kronecker sum of local algebras
\begin{equation}
\su_{\sA}(2)\otimes\unity_{\sB} + \unity_{\sA}\otimes\su_{\sB}(2) =: \su_{\sA}(2)\widehat{\oplus}\su_{\sB}(2)
\end{equation}
forming the generator of the group of local unitary actions
\begin{equation}
\exp \big(\su_{\sA}(2)\widehat{\oplus}\su_{\sB}(2)\big) = SU_{\sA}(2)\otimes SU_{\sB}(2)\;.
\end{equation}

Remarkably, in this important class of open quantum-dynamical systems,
qubits \sA and \sB are {\em locally} (H)"~controllable, respectively,
while {\em globally} the system satisfies but the (WH)"~condition.

\medskip
The final Lie wedge in these systems reads as
\begin{eqnarray}
 \fw_{dk}^{2\oplus 2}=\widehat{\adr}_{\su_{\sA}(2)\widehat{\oplus}\su_{\sB}(2)}\oplus-\fc_{dk}^{2\oplus 2}
\end{eqnarray}
with the convex cone
\begin{eqnarray}
  \fc_{dk}^{2\oplus 2}:=\mathbb{R}^+_0 \conv\big\{K_{d}^{2\oplus 2}+P_{k}^{2\oplus 2}\big\}
\end{eqnarray}
being given in terms of the respective $\fk$ and $\fp$-components.
Here we use the short-hand of \eqref{eqn:U-superop} in the sense of
$\hat U_{2\otimes 2}:=\bar U_{2\otimes 2}\otimes U_{2\otimes 2}$ 
to arrive at
\begin{eqnarray}
K_d^{2\oplus 2}&:=&\{\hat U_{2\otimes 2}(i\hat \sigma_d)\hat U^\dagger_{2\otimes 2}\,|\,U_{2\otimes 2}\in SU(2)\otimes SU(2)\}\qquad\\
P_k^{2\oplus 2}&:=&\{\hat U_{2\otimes 2}(\Gamma_0)\hat U^\dagger_{2\otimes 2}\,|\,U_{2\otimes 2}\in SU(2)\otimes SU(2)\}\,.
\end{eqnarray}

As before, this immediately results from the initial wedge approximation by step (1)
\begin{eqnarray}
  \fw_1^{2\oplus 2}:=\widehat{\adr}_{\mathfrak{su}_{\sA}(2)\widehat{\oplus}\mathfrak{su}_{\sB}(2)}\oplus
  -\mathbb{R}^+_0(i\hat \sigma_d+\Gamma_0)
\end{eqnarray}
followed by conjugation with $\Adr_{\exp E(\fw)}=\Adr_{2\otimes 2}$ to give
\begin{equation}\label{eqn:2qubiKP}
  K_d^{2\oplus 2}+P_k^{2\oplus 2}
        := \mathcal O_{SU(2)\otimes SU(2)}\big(i\hat \sigma_d+\Gamma_0\big)\;.\quad
\end{equation}
Step (3) then takes the convex hull. ---
To show globality, let $\fw'$ denote the global
Lie wedge corresponding to the fully H"~controllable
system given in Eqn.~ \eqref{hcontrol2qubiteqn}. Then $\fw_{dk}^{2\oplus 2}\subset\fw'$, and it can be
shown that $\fw_{dk}^{2\oplus 2}$ satisfies the conditions of Corollary \ref{HHLglobalcor}
and therefore is {\em global}.

\subsection[Controllable Channels Satisfying Only (WH)"~Condition]
	{Two-Qubit Channels Satisfying Only the (WH)"~Condition}
In the final example of a two-qubit system, the independent local controls shall even
be limited to either $x$ or $y$-controls on the two qubits according to
\begin{eqnarray}\label{eqn:WHWHcontrol2qubit}
     \dot\rho 
            =-\big(i(\hat\sigma_d +  u_{\sA}\hat\sigma_{c1} + u_{\sB}\hat\sigma_{1c'})+\Gamma_0\big)\rho\;,
\end{eqnarray}
where now $\hat\sigma_d:=i\big(\hat\sigma_{z1} + \hat\sigma_{1z} + \hat\sigma_{zz}\big)$ and
$\hat\sigma_{c1}$ with a single $c\in\{x,y\}$ and likewise $\hat\sigma_{1c'}$ with
a single $c'\in\{x,y\}$ and $u_{\sA},u_{\sB}\in\mathbb{R}$.
Furthermore, assume the system undergoes {\em local uncorrelated noise}
in each of the two subsystems in the sense that the Lindblad operators are of local form
\begin{eqnarray}
V_k &\in &\{\sigma_{k1}\;|\; k\in\{x,y,z\}\}\\
V_{k'} &\in &\{\hat\sigma_{1k'}\;|\; k'\in\{x,y,z\}\}\;,
\end{eqnarray}
where $k$ and $k'$ are chosen independently $k,k'\in\{x,y,z\}$ so that
in the convention of \eqref{eqn:def-ad-sigma} one finds
\begin{equation}
\Gamma_0\;:=\;2\gamma\hat \sigma_{k1}^2\; +\; 2\gamma'\hat \sigma^2_{1k'}\;.
\end{equation}
This system satisfies but the (WH)"~condition both locally and globally,
the latter following from
\begin{equation}
\expt{i\hat\sigma_{c1}, i\hat\sigma_{1c'}, i\hat\sigma_d}_{\sf Lie} = \widehat{\adr}_{\su(4)}\;.
\end{equation}

\bigskip
The Lie wedge is given by
\begin{eqnarray}
  \fw^{cc'}_{kk'}:=\expt{i\hat \sigma_c} + \expt{i\hat \sigma_{c'}} \oplus - \fc^{cc'}_{kk'}\;,
\end{eqnarray}
where the two-dimensional edge of the wedge is generated by the rays
$\expt{i\hat \sigma_{c1}}$, $\expt{i\hat\sigma_{1c'}}$ and the cone
\begin{equation}
\begin{split}
  \fc^{cc'}_{kk'}:=\mathbb{R}_0^+ \conv\big\{& K^c(\theta)+K^{c'}(\theta') +K^{cc'}(\theta,\theta')\\
                        &+ P_k^c(\theta)+P^{c'}_{k'}(\theta')\,|\,\theta,\theta'\in\mathbb R\,\big\}
\end{split}
\end{equation}
is given in terms of the $\fk$- and $\fp$"~components
(setting $\theta:=u_{\sA}$ and $\theta':=u_{\sB}$
and using the relations in \eqref{Eqn:Kcomponenttrig}) as
\begin{equation}\label{eqn:two-qubit-K123}
K^c(\theta) + K^{c'}(\theta') + K^{cc'}(\theta,\theta')  
        = \left[\begin{smallmatrix}\cos(\theta)\\\sin(\theta)\\\cos(\theta')\\\sin(\theta')\\ \cos(\theta)\cos(\theta') \\
        \cos(\theta)\sin(\theta')\\ \cos(\theta')\sin(\theta)\\ \sin(\theta)\sin(\theta') \end{smallmatrix}\right]
                \cdot i
        \left[\begin{smallmatrix}\hat \sigma_{z1}\\ \varepsilon_{czq}\hat \sigma_{q1}\\
                \hat \sigma_{1z}\\ \varepsilon_{c'zq'} \hat \sigma_{1q'}\\
                \hat \sigma_{zz}\\ \varepsilon_{c'zq'} \hat \sigma_{zq'}\\
                \varepsilon_{czq} \hat \sigma_{qz}\\
                \hat \sigma_{qq'} \end{smallmatrix}\right]\qquad
\end{equation}
and (as in \eqref{eqn:qubitchancor2})
\begin{equation}\label{eqn:two-qubit-P1}
\begin{split}
P^c_k&(\theta)
 = 2\gamma\left[\begin{smallmatrix}\cos^2(\theta)\\
                                        \sin^2(\theta)\\
                                        \cos(\theta)\sin(\theta)\end{smallmatrix}\right]\cdot
               \left[\begin{smallmatrix}\hat \sigma^2_{k1}\\
                                        \hat \sigma^2_{r1}\\
                        \varepsilon_{ckr}\{\hat \sigma_{k1},\hat \sigma_{r1}\}_+ \end{smallmatrix}\right]\\[2mm]
&= \tfrac{\gamma}{2}\left[\begin{smallmatrix}2\\
                                        1+\cos(2\theta)\\
                                        1-\cos(2\theta)\\
                                        \sin(2\theta)\end{smallmatrix}\right]\cdot
               \left[\begin{smallmatrix}\unity\\-(\sigma^\top_k\otimes\sigma_k)\otimes\unity_{\sB}\\
                                                -(\sigma^\top_r\otimes\sigma_r)\otimes\unity_{\sB}\\
                                                -\varepsilon_{ckr} ((\sigma^\top_k\otimes\sigma_r)+(\sigma^\top_r\otimes\sigma_k))\otimes\unity_{\sB}\end{smallmatrix}\right]\\[2mm]
\end{split}
\end{equation}
as well as
\begin{equation}\label{eqn:two-qubit-P2}
\begin{split}
P^{c'}_{k'}&(\theta')
 = 2\gamma'\left[\begin{smallmatrix}\cos^2(\theta')\\
                                        \sin^2(\theta')\\
                                        \cos(\theta')\sin(\theta')\end{smallmatrix}\right]\cdot
               \left[\begin{smallmatrix}\hat \sigma_{1k'}^2\\
                                        \hat \sigma_{1r'}^2\\
                        \varepsilon_{c'k'r'}\{\hat \sigma_{1k'}^2,
                                              \hat \sigma_{1r'}^2\}_+ \end{smallmatrix}\right]\\[2mm]
&= \tfrac{\gamma}{2}\left[\begin{smallmatrix}2\\
                                        1+\cos(2\theta')\\
                                        1-\cos(2\theta')\\
                                        \sin(2\theta')\end{smallmatrix}\right]\cdot
               \left[\begin{smallmatrix}\unity\\-\unity_{\sA}\otimes(\sigma^\top_k\otimes\sigma_k)\\
                                                -\unity_{\sA}\otimes(\sigma^\top_r\otimes\sigma_r)\\
                        -\varepsilon_{c'k'r'} \unity_{\sA}\otimes((\sigma^\top_{k'}\otimes\sigma_{r'})+(\sigma^\top_{r'}\otimes\sigma_{k'}))
                \end{smallmatrix}\right].\\[2mm]
\end{split}
\end{equation}

To see this, observe that by step (1), the initial wedge approximation is given by
\begin{equation}
  \fw_1:= \expt{i \hat \sigma_c}+\expt{i\hat \sigma_{c'}}
        \oplus-\mathbb R^+_0(i\hat \sigma_d+2\gamma\hat \sigma_k^2+2\gamma'\hat \sigma^2_{k'})\,,\\[2mm]
\end{equation}
which has to be conjugated by $\Adr_{\exp(E(\fw))}$. As usual, the edge of
the wedge is invariant under such a conjugation, so we need
only determine the effects on the drift components of the system
as is done in Eqns.~\eqref{eqn:two-qubit-K123} through \eqref{eqn:two-qubit-P2}.
Moreover, the wedge is global by application of Corollary \ref{HHLglobalcor}.

\medskip
Now, the generalisation to systems with more than two qubits satisfying the (H)"~ or (WH)"~condition is obvious:
assuming uncorrelated noise, the $\fp$"~parts of the Lie wedges can be immediately
extended on the grounds of the previous description, since all processes are
local on each qubit. Though straightforward, calculating the $\fk$"~components becomes
a bit more tedious: but the many-body coherences have to be considered just as in \eqref{eqn:two-qubit-K123}.

\section{Outlook: Approximating Reachable Sets}
Knowing the {\em global} Lie wedge of a {\em coherently controlled}
Markovian system provides a convenient means
to efficiently approximate its reachable sets.
As in the case of a Lie algebra, the image of the wedge $\fw$ under
the exponential map yields a first approximation of the corresponding
Lie semigroup $\bS$. Unfortunately, this image is in general only
a proper subset of $\bS$---this, however, may happen also
for Lie algebras when the corresponding Lie group is non-compact.
Therefore, one has to allow for finite products of the form
${\rm e}^{A_1}{\rm e}^{A_2}\cdots{\rm e}^{A_\ell}$ with
$A_1,A_2,\dots, A_\ell \in \fw$ to obtain the entire semigroup $\bS$. 
Although the  minimal number $\ell_*$ of factors to generate $\bS$
(called \emph{number of intrinsic control-switches})
is in general unknown, this approach provides a much more effective
parametrization of the reachable sets than the standard method
which works with the original control directions and
piecewise constant controls as parameter space. Thereby one can optimize
target functions almost directly over the reachable sets thus
complementing standard optimal control methods of open systems
\cite{Teo91,JPB_decoh,Rabitz07}. Particularly simple are systems whose Lie wedges
do carry a Lie-semialgebra structure (like in isotropoic
depolarising channels). Here
one knows {\em a priori} that only a few (or sometimes even
zero) intrinsic control-switches are necessary, so some control
problems may actually be solved by {\em constant controls}.
\section{Conclusions}

We have generalised standard unital quantum channels (bit-flip,
phase-flip, bit-phase-flip, and depolarising) by allowing for different
degree of coherent Hamiltonian control. 
For the first time, here we have characterized their
respective global Lie wedges governing {\em all directions}
the controlled open system can possibly take.
The results have been further generalised
to various types of two-qubit systems with uncorrelated noise.
Since controlled multi-qubit channels can be treated likewise, 
the geometrical Lie-semigroup approach taken is anticipated
to find wide applications in quantum systems theory and engineering:
this is because knowing the global Lie wedge of a controlled
Markovian system paves the way to efficiently approximate its reachable sets.
Thus this 
knowledge  will be very useful for improving
known bounds (cf.~\cite{Yuan10}) on the corresponding system semigroup
$\bP_\Sigma$ in follow-up work.

Finally, our results demonstrate that the Lie wedges
associated to most of the controlled quantum systems
do \emph{not} take the special form of Lie semialgebras,
an important exception being the fully controlled isotropic
depolarising channel.

\section{Appendix}
\subsection{The Principal Theorem of Globality}\label{app:A}

For the reader's convenience, we state the `\/Principal Globality
Theorem\/' with minor simplifications. For the full version and
its (quite involved) proof we refer to \cite{HHL89}, which we sketch
in the sequel.

Let $\bG$ be a matrix Lie group with Lie algebra $\fg$, so
\begin{equation}
\fg \, X := \{AX \;|\; A \in \fg\}
\end{equation}
can  be envisaged as {\em tangent space} $T_X \bG $ at $X\in\bG$,
while $T\bG$ and $T^*\bG$ shall denote the {\em tangent bundle}
and, respectively, {\em cotangent bundle} of $\bG$. Thus, one
has the isomorphisms
\begin{equation}
T\bG \cong \fg \times \bG
\quad\text{and}\quad
T^*\bG \cong \fg^* \times \bG.
\end{equation}
%
%
%
Now, let $\fw$ be any wedge of $\fg$. A {\em 1-form} on $\bG$
is a smooth cross section of the cotangent bundle,
i.e.~$\omega: \bG\to T^*\bG$ with $\omega(X)\in T^*_X\bG$.
Moreover, $\omega$ is called
\begin{enumerate}
\item[(1)]
{\em exact} if there exists a smooth function
$\varphi: \bG \to \R{}$ such that ${\rm d}\varphi = \omega$;
\item[(2)] {\em $\fw$"~positive} at $X\in\bG$ if
$\langle\omega(X),AX\rangle \geq 0$ for all $A\in \fw$;
\item[(3)] {\em strictly $\fw$"~positive} at $X\in\bG$ if
$\fw$"~positivity holds at $X\in\bG$ and one has
$\expt{\omega(X),AX}>0$ for all  $A\in\fw\setminus -\fw$;
\end{enumerate}

\noindent
The existence of a strictly $\fw$"~positive 1"~form is ensured
in the following scenario \cite{HHL89}:
If $\bG$ is a Lie group with Lie algebra $\fg$ and
$\bH$ a closed subgroup with Lie algebra $\fh$,
then for any Lie wedge $\fw\subset\fg$ whose edge
$E(\fw)$ coincides with $\fh$ one can construct strictly
$\fw$"~positive 1"~forms on $\bG$.
Note, however, that these 1"~forms on $\bG$ are in general
\emph{not} exact. Yet, whenever exactness can be guaranteed
in addition, one has the following equivalences.

\medskip
\begin{theorem}[\cite{HHL89}]\label{thm:PGT}
Let $\bG$ denote a finite-dimensional real matrix Lie group with
Lie algebra $\fg$ and let $\fw$ be a Lie wedge of $\fg$.
Moreover, let $\mathfrak{g}_0:=\langle \fw \rangle_{\rm Lie}$ be the Lie
subalgebra generated by $\fw$ and let $\mathbf{G}_0$ be the corresponding
Lie subgroup of $\bG$. Further, assume that $\mathbf{G}_0$ is closed within
$\bG$. Then the following statements are equivalent:
\begin{enumerate}
\item[(a)]
$\fw$ is global in $\bG$.
\item[(b)]
$\fw$ is global in $\mathbf{G}_0$.
\item[(c)]
There is a closed connected subgroup $\bH$ of $\mathbf{G}_0$
with $L(\bH) = E(\fw)$ and a $1$-from $\omega$ on $\mathbf{G}_0$
which satisfies the following conditions:
\begin{enumerate}
\item[(i)]
$\omega$ is exact.
\item[(ii)]
$\omega$ is $\fw$-positive for all $X \in \mathbf{G}_0$.
\item[(iii)]
$\omega$ is strictly $\fw$-positive at the identity $\unity$.
\end{enumerate}
\end{enumerate}
\end{theorem}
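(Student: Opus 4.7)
The strategy is to establish the cyclic chain (a) $\Leftrightarrow$ (b) $\Leftrightarrow$ (c), concentrating the substantive work in (b) $\Leftrightarrow$ (c) and in particular in the direction (b) $\Rightarrow$ (c).

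First I would dispose of (a) $\Leftrightarrow$ (b). Since $\mathbf{G}_0$ is by assumption closed in $\bG$ and $\fg_0 = \langle \fw \rangle_{\rm Lie}$, any Lie subsemigroup $\bS$ certifying globality of $\fw$ in $\bG$ is generated by $\exp(\fw) \subseteq \mathbf{G}_0$; closedness of $\mathbf{G}_0$ in $\bG$ forces the topological closure of $\langle \exp(\fw)\rangle_S$ to stay inside $\mathbf{G}_0$, so $\bS \subseteq \mathbf{G}_0$ and $\fw$ is global in $\mathbf{G}_0$. The converse is immediate: a Lie subsemigroup of a closed subgroup is a Lie subsemigroup of the ambient group as well.

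For (c) $\Rightarrow$ (b) I would take as candidate $\bS := \overline{\langle \exp(\fw)\rangle_S} \subseteq \mathbf{G}_0$ and show $L(\bS) = \fw$. The inclusion $\fw \subseteq L(\bS)$ holds by construction. For the reverse, let $\varphi: \mathbf{G}_0 \to \R{}$ be a primitive of $\omega$. The global $\fw$-positivity of $\omega$ yields $\tfrac{d}{dt}\varphi(Xe^{tA})\big|_{t=0} = \langle \omega(X), AX\rangle \geq 0$ for every $A \in \fw$ and every $X \in \mathbf{G}_0$, so $\varphi$ is nondecreasing along the generating one-parameter arcs, and by continuity $\varphi(Y) \geq \varphi(\unity)$ for every $Y \in \bS$. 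If $A \in L(\bS)$, then $e^{tA} \in \bS$ for all $t \geq 0$, forcing $\langle \omega(\unity), A\rangle \geq 0$. Strict $\fw$-positivity at $\unity$, together with the fact that $E(\fw)$ is exactly the Lie algebra $L(\bH)$ of the closed subgroup $\bH$ (so that edge directions are absorbed into $\bH$), then excludes the possibility $A \notin \fw$ and yields $L(\bS) \subseteq \fw$.

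The main obstacle is (b) $\Rightarrow$ (c): assuming $\fw$ is global, one has to manufacture an \emph{exact} 1-form on $\mathbf{G}_0$ with the prescribed positivity properties. My plan would be to equip $\mathbf{G}_0$ with the left-invariant preorder $X \preceq Y \iff X^{-1}Y \in \bS$ and then to build a smooth monotone potential $\varphi : \mathbf{G}_0 \to \R{}$ for this preorder, setting $\omega := d\varphi$; monotonicity of $\varphi$ along semigroup trajectories is then exactly the $\fw$-positivity of $\omega$ at each point, while strict $\fw$-positivity at $\unity$ reflects that $L(\bS) = \fw$ has its edge $E(\fw)$ \emph{tangent} to $\bH$, so no non-edge direction can sit in $L(\bS \cap \bS^{-1})$. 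The delicate step, which I expect to absorb the bulk of the effort, is the \emph{existence} of such a smooth monotone potential --- an ordered manifold need not admit one. The construction must exploit (i) closedness of $\bS$ (so the induced order has no pathological clusters), (ii) the Lie-wedge property of $\fw$ (providing infinitesimal regularity at every point), and (iii) closedness of $\bH$ with $L(\bH) = E(\fw)$, which allows one to descend the problem to the homogeneous space $\mathbf{G}_0/\bH$, where the induced order becomes a genuinely pointed cone field; there a Minkowski-functional-type construction against the dual wedge $\fw^*$, averaged over $\bH$ and smoothed by convolution against a compactly supported bump, produces a candidate $\varphi$, whose exactness is automatic and whose strict positivity at $\unity$ is then verified by a direct infinitesimal computation.
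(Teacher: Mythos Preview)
The paper does not actually prove this theorem: it is quoted from \cite{HHL89}, and the appendix explicitly says ``for the full version and its (quite involved) proof we refer to \cite{HHL89}''. So there is no in-paper proof to compare against; what the paper does supply is only the surrounding context (the remark that strictly $\fw$-positive $1$-forms always exist once $E(\fw)$ is the Lie algebra of a closed subgroup, but are in general not exact).

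Judged on its own, your sketch is reasonable for (a) $\Leftrightarrow$ (b) and for the broad strategy of (b) $\Rightarrow$ (c). However, your (c) $\Rightarrow$ (b) contains a genuine gap. From $A\in L(\bS)$ you correctly derive $\varphi(e^{tA})\geq\varphi(\unity)$ and hence $\langle\omega(\unity),A\rangle\geq 0$. But this single linear inequality places $A$ only in the half-space $\{B:\langle\omega(\unity),B\rangle\geq 0\}$, which strictly contains $\fw$ whenever $\fw$ is not itself a half-space; strict $\fw$-positivity at $\unity$ does \emph{not} turn this half-space into $\fw$. The sentence ``then excludes the possibility $A\notin\fw$'' is therefore unjustified as written. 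The standard argument needs more: one uses $\fw$-positivity of $\omega$ at \emph{every} point of $\mathbf{G}_0$ (not only at $\unity$) together with the closed-subgroup hypothesis $L(\bH)=E(\fw)$ to first pin down $E(L(\bS))=E(\fw)$, and then invoke the Lie-wedge structure of $L(\bS)$ (Ad-invariance under its own edge) plus the global monotone potential to force $L(\bS)\subseteq\fw$. A minor additional slip: in the right-invariant convention used here the relevant curve is $t\mapsto e^{tA}X$ with tangent $AX$, not $t\mapsto Xe^{tA}$.
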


\medskip
Now, the following consequence of the `\/Principal Globality Theorem\/'
already mentioned in the main text is a useful tool
whenever a {\em global} Lie wedge $\fw$ embracing the Lie wedge  $\fw_0$
of interest is already known.

\medskip
{\em Corollary~\ref{HHLglobalcor}~(\cite{HHL89})} 
Let $\bG$ be a Lie group with Lie algebra $\fg$ and let
$\fw_0\subseteq\fw$ be two Lie wedges in $\fg$.
Provided
\begin{eqnarray}
   \fw_0\setminus-\fw_0\subseteq\fw \setminus -\fw \;,
\end{eqnarray}
then $\fw_0$ is global in $\bG$ if the following conditions are
satisfied:
\begin{enumerate}
\item[(i)]
$\fw$ is global in $\bG$.
\item[(ii)]
The edge of $\fw_0$ 
is the Lie algebra of a closed Lie subgroup of $\bG$.
\end{enumerate}

\medskip
In other words, if the edge of the wedge follows the intersection
$E(\fw_0)=E(\fw)\cap\fw_0$
and $\fw$ is global, then $\fw_0$ is also a global Lie wedge,
whenever $\exp E(\fw_0)$ generates a closed subgroup.

\subsection{Lie Semialgebra Structure in Example 1}\label{app:B}

In Sec.~\ref{sec:geoR3} we stated that the Lie wedge of
Example~1
\begin{equation}
\fw_0 := \so(3)\;\oplus\; (-\fc_0),
\end{equation}
where $\fc_0 := \R{}^+_0\mathcal M(\Gamma_0)$ and
$M(\Gamma_0):=
\{S \in \sym(3) \,|\, S \prec \Gamma_0\}$,
is in fact a {\em Lie semialgebra} for $\Gamma_0 = \lambda \cdot \unity$
(corresponding to the isotropic depolarising channel),
whereas it fails to be a Lie semialgebra for any other $\Gamma_0$.
Recall, here $\sym(3)$ is the set of all symmetric
$3 \!\times\! 3$-matrices. For proving the above statement,
we distinguish the following
cases\footnote{Although case (iii) seems to be quite similar to
case (ii), its proof is more involved and a helpful preparation
of the general case (iv).}:
\begin{enumerate}
\item[(i)]
$\Gamma_0$ is a multiple of the identity, thus
we can assume without loss of generality $\Gamma_0 := \unity$;
\item[(ii)]
$\Gamma_0$ has zero as eigenvalue with multiplicity $2$, thus we
can assume $\Gamma_0 := \diag(1,0,0)$.
\item[(iii)]
$\Gamma_0$ has an eigenvalue different to zero with multiplicity $2$,
thus without loss of generality $\Gamma_0 :=\mbox{diag}(1,1,0)$.
\item[(iv)]
$\Gamma_0$ has three distinct eigenvalues, i.e.
$\Gamma_0 :=\mbox{diag}(a,b,c)$ with $a > b > c \geq 0$.
\end{enumerate}

In all cases, the identification of the dual wedge 
of $\fw_0$ is crucial for the compution of the
tangent space $T_A\fw_0$ at $A \in \fw_0$ via \eqref{eqn:semialg-incl}.
Therefore, we first provide an auxiliary result characterizing
the dual cone of $\fc_0$ within $\sym(3)$. 

\begin{lemma}
\label{dual}
Let $\Gamma_0 := \diag(a,b,c)$ with $a \geq b \geq c \geq0$ and let
$\fc_0 := \R{}^+_0\mathcal M(\Gamma_0)$ with
$ \mathcal{M}(\Gamma_0) :=
\{S \in \sym(3) \,|\, S \prec \Gamma_0\}$.
Then the dual cone of $\fc_0$ within $\sym(3)$ is given by
\begin{eqnarray}
\mathfrak{c}_{\mathfrak{p}}^*
:=\left\{S \in \sym(3) \,|\,
c \lambda_1(S) + b \lambda_2(S) + a \lambda_3(S) \geq 0\right\},
\end{eqnarray}
provided $\lambda_1(S) \geq \lambda_2(S) \geq \lambda_3(S)$
are the eigenvalues of $S$.
\end{lemma}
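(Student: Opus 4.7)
The plan is to reduce the dual-cone condition to a minimisation over the orthogonal orbit of $\Gamma_0$ and then invoke the classical Von Neumann trace inequality. By definition, $S\in\fc_\fp^*$ is equivalent to $\tr(ST)\geq 0$ for every $T\in\fc_0=\R{}^+_0\mathcal M(\Gamma_0)$; dropping the irrelevant nonnegative scalar, this amounts to $\min_{T\in\mathcal M(\Gamma_0)}\tr(ST)\geq 0$. By Uhlmann's theorem, already invoked above, $\mathcal M(\Gamma_0)=\conv\mathcal O_{SO(3)}(\Gamma_0)$, and the linear functional $T\mapsto\tr(ST)$ attains its infimum on this compact convex set at an extreme point. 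Hence the characterisation of $\fc_\fp^*$ reduces to
\begin{equation*}
\min_{\Theta\in SO(3)}\tr\!\big(S\,\Theta\Gamma_0\Theta^\top\big)\;\geq\;0.
\end{equation*}

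For the inner minimisation I would invoke Von Neumann's trace inequality: for any symmetric matrices $A,B$ with ordered eigenvalues $\alpha_1\geq\alpha_2\geq\alpha_3$ and $\beta_1\geq\beta_2\geq\beta_3$, one has $\tr(AB)\geq\alpha_1\beta_3+\alpha_2\beta_2+\alpha_3\beta_1$, with equality precisely when the eigenbases of $A$ and $B$ are anti-aligned. The cleanest derivation I know writes $\tr(AB)=\sum_{i,j}\alpha_i\beta_j P_{ij}$ with $P_{ij}:=(v_i^\top w_j)^2$ doubly stochastic in orthonormal eigenbases $(v_i)$ of $A$ and $(w_j)$ of $B$; Birkhoff's theorem then reduces the minimum over $P$ to permutations, and the rearrangement inequality selects the order-reversing one. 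Specialising with $A=S$ and $B=\Theta\Gamma_0\Theta^\top$ (eigenvalues $a\geq b\geq c$) yields
\begin{equation*}
\min_{\Theta\in SO(3)}\tr\!\big(S\,\Theta\Gamma_0\Theta^\top\big) \;=\; c\,\lambda_1(S)+b\,\lambda_2(S)+a\,\lambda_3(S),
\end{equation*}
and a minimising $\Theta$ always exists within $SO(3)$, since flipping the orientation of one eigenvector of $\Gamma_0$ corrects the determinant without altering $\Theta\Gamma_0\Theta^\top$.

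Chaining the two reductions produces exactly the stated description of $\fc_\fp^*$. I do not anticipate any serious obstacle; the one subtlety is the bookkeeping inside Von Neumann's inequality, where one must carefully track that the relevant extremum is a \emph{minimum} (because nonnegativity of $\tr(ST)$ is demanded on the \emph{whole} orbit) and therefore pair the \emph{largest} eigenvalue $\lambda_1(S)$ with the \emph{smallest} coefficient $c$ of $\Gamma_0$. Inverting this pairing would flip the resulting inequality and produce the wrong cone, but once the ordering convention is fixed at the outset, the equivalence chain closes without further issue.
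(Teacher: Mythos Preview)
Your proof is correct and follows essentially the same route as the paper: reduce the dual-cone condition to nonnegativity of $\tr(S\,\Theta\Gamma_0\Theta^\top)$ over the orthogonal orbit via the identity $\mathcal M(\Gamma_0)=\conv\mathcal O_{SO(3)}(\Gamma_0)$, and then invoke Von Neumann's trace inequality to identify the minimum as $c\lambda_1(S)+b\lambda_2(S)+a\lambda_3(S)$. Your write-up is simply more explicit, spelling out the extreme-point argument, a Birkhoff/rearrangement derivation of the inequality, and the $SO(3)$ versus $O(3)$ bookkeeping.
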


\begin{proof}
By definition, one has the equivalence $S \in \fc_{\fp}^*$ if
and only if $\langle S, S'\rangle\geq 0$ for all $S'\in\fc_0$. 
Since $\fc_0 = \R{}_0^+ \conv \mathcal{O}_{SO(3)}(\Gamma_0)$
this condition reduces to
$\langle S, \Theta \Gamma_0\Theta^\top \rangle \geq 0$ 
for all $\Theta \in SO(3)$. Then von Neumann's inequality
\cite{NEUM-37} provides the equivalence:
$\langle S, \Theta \Gamma_0\Theta^\top \rangle \geq 0$
for all $\Theta \in SO(3)$ if and only if 
$c \lambda_1(S) + b \lambda_2(S) + a \lambda_3(S) \geq 0 $,
where $\lambda_1(S) \geq \lambda_2(S) \geq \lambda_3(S)$
are the eigenvalues of $S$. Hence the result follows.
\end{proof}

Now, we are prepared to prove the above claim about the
Lie semialgebra property of $\fw_0$ 
\begin{proof} (i) In case $\Gamma_0=\unity$, the pointed cone
$\fc_0 := \R{}^+_0\mathcal M(\Gamma_0)$ equals the ray
$\R{}^+_0 \unity$. By Lemma \ref{dual}, we obtain
$\mathfrak{c}_{\mathfrak{p}}^* = \{S \in \sym(3) \,|\, \tr S \geq 0\}$
and thus
$\fc_0^* = \mathfrak{sl}(3,\mathbb R) \oplus \mathbb R^+_0 \Gamma_0$,
where $\mathfrak{sl}(3,\R{})$ denotes the set of all
$3 \!\times\! 3$-matrices with trace zero.
Hence
\begin{equation}
\fw^*_0 = \so(3)^\perp \cap \fc^*_0 = \{S \in \sym(3)\;|\; \tr S \geq 0\}\;.
\end{equation}
For $A := \lambda \unity + \Omega \in \fw_0$ with $\lambda \geq 0$
and $\Omega \in \so(3)$ it follows
\begin{equation}
A^\perp \cap \fw^*_0 =
\begin{cases}
\{S \in \sym(3)\;|\; \tr S = 0\} & \text{for } \lambda > 0,\\[2mm]
\{S \in \sym(3)\;|\; \tr S \geq 0\} & \text{for } \lambda = 0,
\end{cases}
\end{equation}
and thus
\begin{equation}
T_A \fw_0  = (A^\perp \cap \fw^*_0)^\perp =
\begin{cases}
\so(3) \oplus \R{} \unity & \text{for } \lambda > 0,\\[2mm]
\so(3) & \text{for } \lambda = 0.
\end{cases}
\end{equation}
Thereby the inclusion
$[A, T_A \fw_0] \subset T_A \fw_0$ is obviously always satisfied
and hence $\fw_0$ {\em is a Lie semialgebra for
$\Gamma_0=\unity$}.

\medskip
\noindent
(ii) In case $\Gamma_0=\diag(1,0,0)$, it is easy to see that the
pointed cone $\fc_0 := \R{}^+_0\mathcal M(\Gamma_0)$ actually
consists of all positive semidefinite
$3 \!\times\! 3$-matrices. Here, Lemma \ref{dual} yields
$\mathfrak{c}_{\mathfrak{p}}^* = \fc_0$. This reflects the
well-known fact that the cone of all positive semidefinite
matrices is self-dual \emph{within the space of all symmetric
matrices}. Hence
\begin{equation}
\fw^*_0 = \so(3)^\perp \cap \fc^*_0 = \sym(3) \cap \fc^*_0 = \fc_0\;.
\end{equation}
Now, for
$
A :=  \Gamma_0 + H_z =
\left[\begin{smallmatrix}
1 & 0 & 0\\
0 & 0 & 0\\
0 & 0 & 0
\end{smallmatrix}\right]
+
\left[\begin{smallmatrix}
0 & -1 & 0\\
1 & 0 & 0\\
0 & 0 & 0
\end{smallmatrix}\right]
\in \fw_0
$
we obtain
\begin{equation}
A^\perp \cap \fw^*_0 =
\left\{
\begin{bmatrix}
0 & 0\\
0 & S
\end{bmatrix}
\;\Big|\;
S\; \in \sym(2)\,,\, S\geq 0 \right\}
\end{equation}
and therefore
\begin{equation}
T_A \fw_0  = (A^\perp \cap \fw^*_0)^\perp
 =
\so(3) \oplus
\text{span}
\left\{
\Gamma_0,
p_y,
p_z
\right\}.
\end{equation}
Finally, for disproving the inclusion
$[A, T_A \fw_0] \subset T_A \fw_0$ consider the
commutator of $A\in\fw_0$ and
$
B := p_z
\in\;T_A \fw_0\;.
$
It follows
$
[A,B] = 
-H_z
+
\diag(-2,2,0)
$
which clearly violates the inclusion $[A, T_A \fw_0] \subset T_A \fw_0$.
Thus $\fw_0$ is {\em not a Lie semialgebra for $\Gamma_0=\diag(1,0,0)$}.

\medskip
\noindent
(iii) In case $\Gamma_0=\diag(1,1,0)$, we obtain by Lemma
\ref{dual} the following description
\begin{equation*}
\fw^*_0 = \so(3)^\perp \cap \fc^*_0 = \fc^*_{\fp}=
\{S \in \sym(3) \,|\, \lambda_2(S) +\lambda_3(S) \geq 0\}\;.
\end{equation*}
Now, let $A:=\Gamma_0+H_y$. Then, it is easy to see that
\begin{equation*}
\begin{split}
A^\perp &\cap \fw^*_0 \supseteq \\
& \R{}^+_0\text{conv}
\left\{
\diag(1,-1,1),
\diag(-1,1,1),
(p_z+E_{33})
\right\}\,.
\end{split}
\end{equation*}
Moreover,
for $S \in A^\perp \cap \fw^*_0$ one has the conditions
\begin{equation*}
\langle\Gamma_0,S\rangle = 0
\quad\text{and}\quad
\langle\Theta\Gamma_0\Theta^\top,S\rangle \geq 0
\end{equation*}
for all $\Theta \in SO(3)$. 
Now, differentiating the second condition with respect to
$\Theta \in SO(3)$ shows $\langle[\Omega,\Gamma_0],S\rangle = 0$
for all $\Omega \in \so(3)$, i.e.~$[\Omega,\Gamma_0]$ 
belongs to $(A^\perp \cap \fw^*_0)^\perp$. Thus one has
\begin{equation*}
T_A \fw_0   = (A^\perp \cap \fw^*_0)^\perp
 \supseteq \so(3) \oplus \R{}A \oplus
\text{span}
\left\{
p_x,
p_y
\right\}.
\end{equation*}
Hence, counting dimensions finally yields
\begin{equation*}
T_A \fw_0  = \so(3) \oplus
\text{span}
\left\{
\Gamma_0,
p_x,
p_y
\right\}\,.
\end{equation*}
To disprove the set inclusion $[A, T_A \fw_0] \subset T_A \fw_0$
consider the commutator of $A\in\fw_0$ and
$
B := 
p_y
\in\;T_A \fw_0\;.
$
The computation is left to the reader (see Tab.~\ref{tab:comm-tab}). The
result clearly violates the inclusion $[A, T_A \fw_0] \subset T_A \fw_0$
and thus $\fw_0$ is {\em not a Lie semialgebra} for
$\Gamma_0=\diag(1,1,0)$ either.

\medskip
\noindent
(iv) For $\Gamma_0=\diag(a,b,c)$ with $a > b > c \geq 0$ and
$A := \Gamma_0 + H_\nu$ with $\nu \in \{x,y,z\}$, the same
arguments as above show that $T_A \fw_0$ is given by
\begin{equation*}
T_A \fw_0  = \so(3) \oplus
\text{span}
\left\{
\Gamma_0,
p_x,
p_y,
p_z
\right\}\,.
\end{equation*}
Therefore, an appropriate choice of $B=p_\nu$ with
$\nu \in \{x,y,z\}$ demostrates again that $\fw_0$ is
{\em not a Lie semialgebra} in the general case
$\Gamma_0=\diag(a,b,c)$ with $a > b > c \geq 0$ either.
\end{proof}

Note that in all the above cases the tangent space
of $\fw_0$ has the following form
\begin{equation*}
T_A \fw_0  = \so(3) \oplus \R{}\Gamma_0 \oplus
\rT_{\Gamma_0} \mathcal{O}_{SO(3)}(\Gamma_0)\,,
\end{equation*}
where the tangent space 
of the orbit $\mathcal{O}_{SO(3)}(\Gamma_0)$
at $\Gamma_0$ is given by 
$\rT_{\Gamma_0} \mathcal{O}_{SO(3)}(\Gamma_0)
= \{[\Omega,\Gamma_0] \,|\, \Omega \in \so(3)\}$.

\bibliographystyle{IEEEtran}
\bibliography{IEEEcontrol21}
\end{document}